\documentclass{llncs}

\pagestyle{plain}
\usepackage{float}

\usepackage{tikz}
\usetikzlibrary{arrows,automata}
\usetikzlibrary{positioning}
\usepackage{amssymb}
\usepackage{amsmath}
\usepackage{amsfonts}

\title{Monadic Second-Order Logic\\
with Arbitrary Monadic Predicates\thanks{A preliminary version of this work appeared in MFCS'2014~\cite{FijalkowPaperman14}.
This work was supported by the Alan Turing Institute under the EPSRC grant EP/N510129/1, 
and by the French Agence Nationale de la Recherche,
AGGREG project reference ANR-14-CE25-0017-01.}}

\author{Nathana{\"e}l Fijalkow\inst{1} \and Charles Paperman\inst{2}}
\institute{University of Warwick, United Kingdom \and Institut Math{\'e}matiques de Jussieu - Paris rive gauche}

\usepackage{tikz}
\usetikzlibrary{arrows,automata}
\usetikzlibrary{positioning}
\usepackage{amssymb}
\usepackage{amsmath}
\usepackage{amsfonts}

% general maths
\newcommand{\set}[1]{\{ #1 \}}
\newcommand{\ZO}{\set{0,1}}  
\newcommand{\N}{{\mathbb N}}    

%\newcommand{\C}{\mathcal{C}} 

% logics
\newcommand{\F}{\mathbf{F}}  

\newcommand{\MSO}{\mathbf{MSO}}  
\newcommand{\REG}{\mathbf{REG}}  
\newcommand{\FO}{\mathbf{FO}}

\newcommand{\FOMOD}{\mathbf{FO+MOD}}

\newcommand{\BS}{\mathcal{B}\mathbf{\Sigma}}

\newcommand{\lex}{<_\mathrm{lex}}  

% circuits classes
\newcommand{\NC}{\mathbf{NC^1}}  
\newcommand{\AC}{\mathbf{AC^0}} 
\newcommand{\ACC}{\mathbf{ACC}}  
\newcommand{\LAC}{\mathbf{LAC^0}}

% predicates class
\newcommand{\PP}{\mathcal{P}}

\newcommand{\NN}{\mathrm{Arb}}

\newcommand{\M}{\mathrm{Arb}_1}
\newcommand{\MU}{\mathrm{Arb}_1^{\mathrm{u}}}
\newcommand{\HDOL}{\mathrm{HD0L}}

\newcommand{\Reg}{\mathcal{R}eg}

% predicates
\newcommand{\listelongue}[2]{#1^1,\ldots,#1^{#2}}
\newcommand{\listelong}[1]{#1^1,\ldots,#1^{\ell}}
\newcommand{\liste}[1]{\overline{#1}}

% algebra

%\newcommand{\U}{\mathbf{U}}

% automata
\newcommand{\A}{\mathcal{A}}

% Document starts
\begin{document}

\maketitle

\begin{abstract}
We study Monadic Second-Order Logic ($\MSO$) over finite words, extended with (non-uniform arbitrary) monadic predicates.
We show that it defines a class of languages that has algebraic, automata-theoretic and machine-independent characterizations.
We consider the \emph{regularity question}: given a language in this class, when is it regular?
To answer this, we show a \emph{substitution property} and the existence of a \emph{syntactical predicate}.

We give three applications.
The first two are to give very simple proofs that the Straubing Conjecture holds for all fragments of $\MSO$ with monadic predicates,
and that the Crane Beach Conjecture holds for $\MSO$ with monadic predicates.
The third is to show that it is decidable whether a language defined by an $\MSO$ formula with morphic predicates is regular.
\end{abstract}

\section{Introduction}
Monadic Second-Order Logic ($\MSO$) over finite words equipped with the linear ordering on positions is a well-studied and understood logic.
It provides a mathematical framework for applications in many areas such as program verification, database theory and linguistics.
In 1962, B\"uchi~\cite{Buchi62} proved the decidability of the satisfiability problem for $\MSO$ formulae.

\subsection{Uniform Monadic Predicates}
In 1966, Elgot and Rabin~\cite{ER66} considered extensions of $\MSO$ with uniform monadic predicates.
For instance, the following formula $$\forall x,\ \mathbf{a}(x) \iff x \textrm{ is prime}\enspace,$$
describes the set of finite words such that the letter $a$ appears exactly in prime positions.
The predicate ``$x$ is a prime number'' is a \emph{uniform numerical monadic} predicate.
Being \emph{numerical} means that its interpretation only depends on positions,
\textit{i.e.} $\mathbf{P} = (\mathbf{P}_n)_{n \in \N}$,
\emph{uniform} means that it can be seen as a relation over integers, 
\textit{i.e.} $\mathbf{P} \subseteq \N^k$,
and \textit{monadic} means that it has arity $1$,
\textit{i.e.} $k = 1$.
%Being \emph{numerical} means that its interpretation is independent from the letter, whereas  \emph{uniform} means 
%it can be seen as a relation over integers and finally monadic
%implies that it can be seen as a subset of $\N$. 
Elgot and Rabin were interested in the following question: for a uniform numerical monadic predicate $\mathbf{P} \subseteq \N$,
is the satisfiability problem of $\MSO[\le,\mathbf{P}]$ decidable?
A series of papers gave tighter conditions on $\mathbf{P}$, culminating to two final answers:
in 1984, Semenov~\cite{Semenov84} gave a characterization of the predicates $\mathbf{P}$ such that $\MSO[\le,\mathbf{P}]$ is decidable, 
and in 2006, Rabinovich and Thomas~\cite{Rabinovich07,RT06} proved this characterization to be equivalent to the predicate $\mathbf{P}$ 
being effectively profinitely ultimately periodic.
Further questions on uniform monadic predicates have been investigated.
For instance, Rabinovich~\cite{Rabinovich12} gave a solution to the Church synthesis problem for $\MSO[\le,\mathbf{P}]$, 
for a large class of predicates~$\mathbf{P}$.

In this paper, we consider the so-called numerical monadic predicates and not only the uniform ones:
such a predicate $\mathbf{P}$ is given, for each length $n \in \N$, 
by a predicate over the  first $n$ positions $\mathbf{P}_n \subseteq \set{0,\ldots,n-1}$.
The set $\M$ of these predicates contains the set $\MU$ of uniform monadic predicates.
Note that the subscript $1$ in $\M$ and $\MU$ corresponds to the arity.
A formal definition can be found in Section~\ref{defs:predicates}.

\subsection{Advice Regular Languages}
We call languages definable in $\MSO[\le,\M]$ \textit{advice regular}.
Note that no computability assumptions are made on the monadic predicates,
so this class contains undecidable languages.
Our first contribution is to give equivalent presentations of this class,
which is a Boolean algebra extending the class of regular languages:
\begin{enumerate}
	\item It has an equivalent automaton model: \textit{automata with advice}.
	\item It has an equivalent algebraic model: \textit{one-scan programs}.
	\item It has a machine-independent characterization, based on generalizations of Myhill-Nerode equivalence relations.
\end{enumerate}
This extends the equivalence between automata with advice and Myhill-Nerode equivalence relations proved in~\cite{KRSZ12}
for the special case of uniform monadic predicates.
We will rely on those characterizations to obtain several properties of the advice regular languages.
Our main goal is the following regularity question:
\begin{center} 
\emph{Given an advice regular language $L$, when is $L$ regular?}
\end{center}
To answer this question, we introduce two notions:
\begin{itemize}
	\item The \textit{substitution property}, which states that if a formula $\varphi$ together with the predicate $\mathbf{P}$ defines
	a regular language $L_{\varphi,\mathbf{P}}$, then there exists a \emph{regular predicate} $\mathbf{Q}$ such that 
	$L_{\varphi,\mathbf{Q}} = L_{\varphi,\mathbf{P}}$.
	\item The \textit{syntactical predicate} of a language $L$,
	which is the ``simplest'' predicate $\mathbf{P}_L$ such that $L \in \MSO[\le,\mathbf{P}_L]$.
\end{itemize}

Our second contribution is to show that the class of advice regular languages has the substitution property,
and that an advice regular language $L$ is regular if, and only if, $\mathbf{P}_L$ is regular.
We apply these results to the case of \emph{morphic predicates}~\cite{CT02}, and obtain the following decidability result:
given a language defined by an $\MSO$ formula with morphic predicates, one can decide whether it is regular.

\subsection{Motivations from Circuit Complexity}
Extending logics with predicates also appears in the context of circuit complexity.
Indeed, a descriptive complexity theory initiated by Immermann~\cite{Immerman87}
relates logics and circuits.
For instance, a language is recognized by a Boolean circuit of constant depth and unlimited fan-in
if, and only if, it can be described by a first-order formula with any numerical predicates of any arity,
\textit{i.e.} $\AC = \FO[\NN]$.

This correspondence led to the study of two properties, which characterize the regular languages
(Straubing Conjecture) and the languages with a neutral letter (Crane Beach Conjecture) in several fragments and extensions of $\FO[\NN]$.
The Straubing Conjecture would, if true, give a deep understanding of many complexity classes inside $\NC$. 
For instance the Straubing Conjecture for first-order logic with counting quantifiers is equivalent to the separation of $\ACC$ and $\NC$.
In the case of two-variable first-order logic, it implies tight bounds for the addition function.
Many cases of this conjecture are still open and are often equivalent to proving circuit lower bounds. 
The Crane Beach Conjecture was introduced as a model-theoretic approach to prove lower bounds, however 
this conjecture has been disproved~\cite{BILST05}.
On the positive sides, both conjectures hold in the special case of monadic predicates~\cite{BILST05,Straubing94} for several fragments.
Our third contribution is to give simple proofs of both the Straubing and the Crane Beach Conjectures for monadic predicates,
relying on our previous characterizations and extending them to abstract fragments.   
Recently, Gehrke et al~\cite{GKP16} studied first-order logic with monadic predicates but restricted to one variable,
and were able to obtain equations characterizing the regular languages in this class.
%As a corollary, we obtain a transfer theorem, allowing to transfer separation results for (classical) logical fragments
%to their extensions with monadic predicates.

\subsection{Outline}
Section~\ref{sec:advice_regular} gives characterizations of advice regular languages,
in automata-theoretic, algebraic and machine-independent terms.
In Section~\ref{sec:substitution}, we study the regularity question,
and give two different answers: one through the substitution property,
and the other through the existence of a syntactical predicate.
The last section, Section~\ref{sec:applications}, provides applications of our results:
easy proofs that the Straubing and the Crane Beach Conjectures hold for monadic predicates
%separation of logical fragments with monadic predicates,
and decidability of the regularity problem for morphic regular languages.

\section{Advice Regular Languages}
\label{sec:advice_regular}
In this section, we introduce the class of advice regular languages and give several characterizations.

\subsection{Predicates}\label{defs:predicates}

A numerical predicate $\mathbf{P}$ of arity $k$ is given by $\mathbf{P} = (\mathbf{P}_n)_{n \in \N}$,
where $\mathbf{P}_n \subseteq \set{0,\ldots,n-1}^k$.
Since we mostly deal with monadic numerical predicates, we often drop the word ``monadic numerical''.
In this definition the predicates are non-uniform: for each length the predicate is different.
A predicate $\mathbf{P}$ if uniform if it is a relation over the natural numbers. More formally, if there exists $\mathbf{Q} \subseteq \N^k$ such that
for every $n$, $\mathbf{P}_n = \mathbf{Q} \cap \set{0,\ldots,n-1}^k$.
In this case we identify $\mathbf{P}$ and $\mathbf{Q}$, and see uniform predicates as subsets of~$\N^k$.

\begin{example}
	The predicate $\mathbf{first} = (\set{0})_{n \in \N}$, which is true only on the \emph{first}
	position, is uniform; we denote it by $\set{0}$.
	Similarly, the predicate $\mathbf{last} = (\set{n-1})_{n \in \N}$, which is true only for the \emph{last} position, 
	is not uniform.
\end{example}

In this paper, we will often treat predicates as words, identifying $\mathbf{P} = (\mathbf{P}_n)_{n \in \N}$ with $\mathbf{P}_n \subseteq \set{0,1}^n$.
In this case we can see $\mathbf{P}$ as a language over $\ZO$, which contains exactly one word for each length.
This simple idea is used throughout the paper, where logical formulae and automata treat predicates as words, allowing us to perform syntactical operations
on them.

We often define predicates $\mathbf{P} = (\mathbf{P}_n)_{n \in \N}$ with $\mathbf{P}_n \in A^n$ for some finite alphabet $A$.
This is not formally a predicate, but this amounts to defining one predicate for each letter in $A$,
and this abuse of notation will prove very convenient.
Similarly, any infinite word $w \in A^\omega$ can be seen as a uniform predicate.

\begin{example}
	The predicate $\mathbf{first}$ can be seen as the infinite word $1 0^\omega$, 
	and the predicate $\mathbf{last}$ as the language of finite words described by the regular expression $0^* 1$.
\end{example}

\subsection{Monadic Second-Order Logic}
The formulae we consider are monadic second-order ($\MSO$) formulae, obtained from the following grammar:
$$\varphi\ =\ \mathbf{a}(x) \mid x \le y \mid P(x) \mid \varphi \wedge \varphi \mid \neg \varphi \mid
\exists x,\ \varphi \mid \exists X,\ \varphi\enspace.$$

Here $x,y,z,\ldots$ are first-order variables, which will be interpreted by positions in the word,
and $X,Y,Z,\ldots$ are monadic second-order variables, which will interpreted by sets of positions in the word.
We say that $\mathbf{a}$ is a letter symbol, $\le$ the ordering symbol and $P,Q,\ldots$ are
the numerical monadic predicate symbols, often refered to as predicate symbols.
The notation 
$$\varphi(\listelong{P},\listelongue{x}{p},\listelongue{X}{q})$$
means that in $\varphi$,
the predicate symbols are among $\listelong{P}$,
the free first-order variables are among $\listelongue{x}{p}$
and the free second-order variables are among $\listelongue{X}{q}$.
A formula without free variables is called a sentence.
We use the notation $\liste{P}$ to abbreviate $\listelong{P}$,
and similarly for all objects (variables, predicate symbols, predicates).

We now define the semantics.
The letter symbols and the ordering symbol are always interpreted in the same way, as expected.
For the predicate symbols, the predicate symbol $P$ is interpreted by a predicate $\mathbf{P}$.
Note that $P$ is a syntactic object, while $\mathbf{P}$ is a predicate used as the interpretation of $P$.
Consider a formula $\varphi(\liste{P},\liste{x},\liste{X})$, a finite word $u$ of length $n$, 
predicates $\liste{\mathbf{P}}$ interpreting the predicate symbols from $\liste{P}$,
a valuation $\liste{\mathbf{x}}$ of the free first-order variables
and a valuation $\liste{\mathbf{X}}$ of the free second-order variables.
We define $u,\liste{\mathbf{P}},\liste{\mathbf{x}},\liste{\mathbf{X}} \models \varphi$ by induction as usual, with
\[
u,\liste{\mathbf{P}},\liste{\mathbf{x}},\liste{\mathbf{X}} \models P(y)
\quad \textrm{ if }\quad \mathbf{y} \in \mathbf{P}_n\ .
\]

A sentence $\varphi(\liste{P})$ and a tuple of predicates $\liste{\mathbf{P}}$ 
interpreting the predicate symbols from $\liste{P}$
define a language 
$$L_{\varphi,\liste{\mathbf{P}}} = \set{u \in A^* \mid u,\liste{\mathbf{P}} \models \varphi}\ .$$
Such a language is called advice regular, and the class of advice regular languages is denoted by $\MSO[\le,\M]$.

\subsection{Automata with Advice}
We introduce automata with advice.
Unlike classical automata, they have access to two more pieces of information about the word being read: 
its length and the current position.
Both the transitions and the final states can depend on those two pieces of information.
For this reason, automata with advice are (much) more expressive than classical automata, and recognize undecidable languages.
A non-deterministic automaton with advice is given by $\A = (Q,q_0,\delta,F)$
where $Q$ is a finite set of states, $q_0 \in Q$ is the initial state, 
$\delta \subseteq \N \times \N \times Q \times A \times Q$ is the transition relation
and $F \subseteq \N \times Q$ is the set of final states.
In the deterministic case, $\delta$ is a function from $\N \times \N \times Q \times A$ into $Q$.

A run of $\A$ over a finite word $u = u_0 \cdots u_{n-1} \in A^*$ is a finite word $\rho = q_0 \cdots q_n \in Q^*$
such that for all $i \in \set{0,\ldots,n-1}$, we have $(i,n,q_i,u_i,q_{i+1}) \in \delta$.
It is accepting if $(n,q_n) \in F$.
One obtains a uniform model by removing one piece of information in the transition function: the length of the word.
This automaton model is strictly weaker, and is (easily proved to be) equivalent to the one introduced in~\cite{KRSZ12},
where the automata read at the same time the input word and a fixed word called the advice.
However, our definition will be better suited for some technical aspects: 
for instance, the number of Myhill-Nerode equivalence classes exactly corresponds
to the number of states in a minimal deterministic automaton.

\begin{example}\label{ex:aut}
The language $\set{a^n b^n c^n \mid n \text{ is a prime number}}$ 
is recognized by a (deterministic) automaton with advice.
The automaton is represented in Figure~\ref{fig:aut}.
It has five states, $q_a,q_b,q_c,q_F$ and $\bot$. The initial state is $q_a$.
The transition function is defined as follows:
$$\begin{array}{llll}
\delta(i,3n,q_a,a) & = & q_a & \textrm{ if } i < n-1 \\
\delta(n-1,3n,q_a,a) & = & q_b \\
\delta(i,3n,q_b,b) & = & q_b & \textrm{ if } n \le i < 2n-1 \\
\delta(2n-1,3n,q_b,c) & = & q_c \\
\delta(i,3n,q_c,c) & = & q_c & \textrm{ if } 2n \le i < 3n-1 \\
\delta(3n-1,3n,q_c,c) & = & q_F \\
\end{array}$$
All other transitions lead to $\bot$, the sink rejecting state.
The set of final states is $F = \set{(3n,q_F) \mid n \text{ is a prime number}}$.
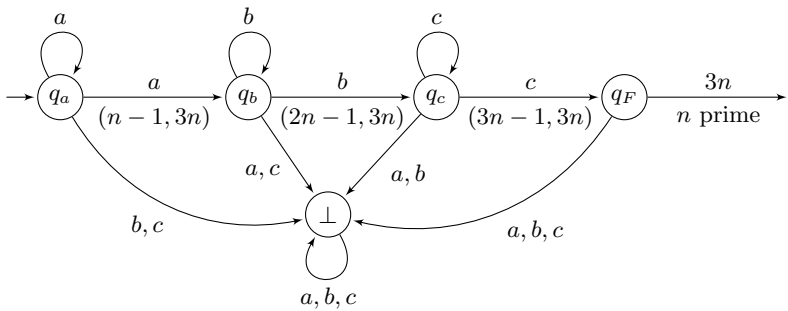
\begin{figure}
\begin{tikzpicture}[->,>=latex',shorten >=1pt,node distance=2cm,on grid,auto,initial text=,bend angle =35,
every state/.style={inner sep=0pt,minimum size=6mm},accepting/.style=accepting by arrow,loop/.style={min distance=5mm,looseness=7.5}]

{\footnotesize
\node[state,initial] (q_0)   {$q_a$};
\node[state](q_1) [right =2.5 of q_0] {$q_b$};	
\node[state](q_2) [right =2.5 of q_1] {$q_c$};
\node[state](qf) [right =2.5 of q_2] {$q_F$};
\node[state](qf2) [right =2.5 of qf, color = white] {};

\node (qab)[below right =1.5 of q_1] {};
\node[state](qb) [below =0.5 of qab] {$\bot$};
		
   \path[->]
   (q_0) edge node {$a$} node[below] {$(n-1,3n)$} (q_1)
   (q_2) edge node {$c$} node[below] {$(3n-1,3n)$} (qf)
   (q_1) edge node {$b$}  node[below] {$(2n-1,3n)$}(q_2)
   (qf) edge node {$3n$} node[below] {$n$ prime} (qf2)

   (q_2) edge[loop above, in=60,out=120 ] node {$c$} (q_2)
   (q_0) edge[loop above, in=60,out=120 ] node {$a$} (q_0)
   (q_1) edge[loop above, in=60,out=120 ] node {$b$} (q_1)
  
   (q_0) edge[bend right] node[below left] {$b,c$} (qb)
   (q_1) edge node[below left] {$a,c$} (qb)
   (q_2) edge node {$a,b$} (qb)
   (qf) edge [bend left]node {$a,b,c$} (qb)

   (qb) edge[loop below, in=240, out=300] node {${a,b,c}$} (qb);

 }
\end{tikzpicture}
\caption{The automaton for Example~\ref{ex:aut}.\label{fig:aut}}
\centering
\end{figure}
\end{example}

We mention another example, that appeared in the context of automatic structures. 
The paper~\cite{Nies07} shows that the structure $(\mathbb{Q},+)$ is automatic with advice,
which amounts to showing that the language 
$$\set{\widehat{x}\ \sharp\ \widehat{y}\ \sharp\ \widehat{z} \mid z = x + y}\enspace, $$
where $\widehat{x}$ denotes the factorial representation of the rational $x$, is advice regular.
A very difficult proof shows that this is not possible without advice~\cite{Tsankov11}.

\subsection{One-scan Programs}
Programs over monoids were introduced in the context of circuit complexity~\cite{Barrington89}:
Barrington showed that any language in $\NC$ can be computed by a program of polynomial length over a non-solvable group.
We present a simplification adapted to monadic predicates, introduced in~\cite{Straubing92} and developed in~\cite{Barrington-Straubing95}. 
We refer to~\cite[Chapter IX.4]{Straubing94} for a complete presentation.
In these works, Barrington and Straubing use Ramsey-theoretic methods to obtain non-expressibility results. 
In the remainder of this paper, we will show a generalization of these results, using a syntactic approach. 
In particular we avoid the use of Ramsey type arguments.

A one-scan program is given by $P = (M,(f_{i,n} : A \to M)_{i,n \in \N},S)$
where $M$ is a finite monoid and $S \subseteq M$.
The function $f_{i,n}$ is used to compute the effect of the $i$\textsuperscript{th}
letter of an input word of length $n$.
We say that the program $P$ accepts the word $u = u_0 \cdots u_{n-1}$ if 
$$f_{0,n}(u_0) \cdots f_{n-1,n}(u_{n-1}) \in S\enspace.$$

Note that this echoes the classical definition of recognition by monoids,
where a morphism $f : A \to M$ into a finite monoid $M$ recognizes the word 
$u = u_0 \cdots u_{n-1}$ if $f(u_0) \cdots f(u_{n-1}) \in S$.
Here, a one-scan program uses different functions $f_{i,n}$, depending on the position $i$
and the length of the word $n$.

\begin{example}
Let $U_1$ be the monoid over $\set{0,1}$ equipped with the classical multiplication. 
Consider the alphabet $A = \set{a,b}$.
We define the one-scan program $(U_1,(f_{i,n})_{i,n \in \N},\set{1})$ as follows. 
If $n$ is not a prime number, then $f_{i,n}$ is constant equal to $0$. 
Otherwise, $f_{i,n}(a) = 0$ and $f_{i,n}(b) = 1$.
Therefore, a word is accepted by this one-scan program if, and only if, its length is prime and all prime positions are labelled by the letter $b$.
\end{example}

\subsection{Myhill-Nerode Equivalence Relations}
Let $L \subseteq A^*$ and $p \in \N$, we define two equivalence relations:
\begin{itemize}
	\item $u \sim_L v$ if for all $w \in A^*$, we have $u w \in L \Longleftrightarrow v w \in L$,
	\item $u \sim_{L,p} v$ if for all $w \in A^p$, we have $u w \in L \Longleftrightarrow v w \in L$.
\end{itemize}
The relation $\sim_L$ is called the (classical) Myhill-Nerode equivalence relation,
and the second is a coarser relation, which we call the $p$-Myhill-Nerode equivalence relation.
Recall that $\sim_L$ contains finitely many equivalence classes if, and only if, $L$ is regular,
\textit{i.e.} $L \in \MSO[\le]$.

\subsection{Equivalence}
We state several characterizations of advice regular languages.

\begin{theorem}[Advice Regular Languages]\label{thm:advice_regular_languages}
Let $L$ be a language of finite words. The following properties are equivalent:
\begin{enumerate}
	\item[(1)] $L \in \MSO[\le,\M]$,
	\item[(2)] $L$ is recognized by a non-deterministic automaton with advice,
	\item[(3)] $L$ is recognized by a deterministic automaton with advice,
	\item[(4)] There exists $K \in \N$ such that for all $i,p \in \N$, 
	the restriction of $\sim_{L,p}$ to words of length $i$ contains at most $K$ equivalence classes.
	\item[(5)] $L$ is recognized by a one-scan program,
\end{enumerate}
In this case, we say that $L$ is advice regular.
\end{theorem}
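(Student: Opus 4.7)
The plan is to establish a cycle of implications $(1) \Rightarrow (2) \Rightarrow (3) \Rightarrow (4) \Rightarrow (5) \Rightarrow (2)$ and close it with $(3) \Rightarrow (1)$. The two directions between $(2)$ and $(3)$ are the standard subset construction, which goes through verbatim for automata with advice since the advice $(i,n)$ is read-only. The implication $(5) \Rightarrow (2)$ is immediate: take the state set to be $M$, the transition at $(i,n)$ on letter $a$ to be right-multiplication by $f_{i,n}(a)$, and the final states to be $\set{(n,m) : m \in S}$.

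For $(1) \Rightarrow (2)$ I would adapt the Büchi--Elgot--Trakhtenbrot translation. Given $\varphi(P_1,\dots,P_k)$ with interpretations $\mathbf{P}_1,\dots,\mathbf{P}_k$, viewing the predicate values position-by-position as extra letters turns $\varphi$ into a standard \MSO\ sentence over $A \times \ZO^k$, hence translatable to a standard NFA; an automaton with advice over $A$ then simulates it by recovering each letter in $A \times \ZO^k$ from the input letter plus the advice $(i,n)$. For $(3) \Rightarrow (1)$, given a DFA with advice $(Q,q_0,\delta,F)$, I would encode its transition table as monadic predicates $P_{q,a,q'}$ defined by $(P_{q,a,q'})_n(i) = 1$ iff $\delta(i,n,q,a) = q'$, plus a predicate capturing $F$; then an \MSO\ sentence with these predicates existentially guesses the accepting run via second-order variables $X_q$ (one per state) and verifies initialisation, deterministic transitions, and acceptance.

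For $(3) \Rightarrow (4)$, take $K = |Q|$: if two words $u,v$ of length $i$ reach the same state when read as prefixes of a length-$n$ word, then any common suffix of length $p = n-i$ leads to the same acceptance decision, so $u \sim_{L,p} v$. For $(4) \Rightarrow (5)$, I would use the uniform bound $K$ to label the $\sim_{L,n-i}$-classes of length-$i$ words by $\set{1,\dots,K}$ in a canonical way; by (4), appending a letter descends to a well-defined transformation on labels, giving functions $f_{i,n}(a)$. The monoid is the transformation monoid of $\set{1,\dots,K} \cup \set{\mathrm{acc},\mathrm{rej}}$: for $i < n-1$ the functions fix $\mathrm{acc},\mathrm{rej}$, and at $i = n-1$ they additionally map each label $c$ to $\mathrm{acc}$ if some (equivalently every) word in the resulting class lies in $L$, and to $\mathrm{rej}$ otherwise. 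The accepting set is $S = \set{\tau : \tau(1) = \mathrm{acc}}$, where $1$ is the label of $\varepsilon$.

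The main obstacle will be $(4) \Rightarrow (5)$: a one-scan program has a \emph{fixed} accepting set $S$, whereas the acceptance criterion in (3) and (4) genuinely depends on $n$. Deferring the accept/reject decision to the position-dependent function $f_{n-1,n}$ and absorbing it into the two fresh monoid elements $\mathrm{acc}$ and $\mathrm{rej}$ is what localises all $n$-dependence inside the position functions; verifying that this still yields a finite monoid of size at most $(K+2)^{K+2}$ and the correct language is then routine.
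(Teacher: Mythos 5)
Your proposal is correct, and its skeleton (subset construction, run-encoding into $\MSO$, transition-monoid ideas) matches the paper's, but two implications are handled by genuinely different arguments. For $1 \Rightarrow 2$ the paper proves closure of automata with advice under union, projection and complementation by induction on the formula (using $3 \Leftrightarrow 4$ for complementation), whereas you reduce to the classical B\"uchi--Elgot--Trakhtenbrot theorem by expanding the alphabet to $A \times \set{0,1}^k$ and hard-wiring the predicate bits into the $(i,n)$-indexed transition relation; this avoids redoing any closure argument for the advice model, at the cost of invoking the classical translation as a black box. For the passage from (4) to (5), the paper goes $4 \Rightarrow 3$ (states are class indices $\lfloor u \rfloor_n$) and then $3 \Rightarrow 5$ (monoid of maps $Q \to Q$ with $S = \set{\phi \mid \phi(q_0) \in F}$); you compose these into a direct $4 \Rightarrow 5$ and, importantly, you explicitly handle the fact that $S$ cannot depend on $n$ while the automaton's $F \subseteq \N \times Q$ does, by pushing the accept/reject decision into the last-position functions via the absorbing elements $\mathrm{acc},\mathrm{rej}$. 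This is more careful than the paper's own $3 \Rightarrow 5$, whose $S = \set{\phi \mid \phi(q_0) \in F}$ does not typecheck literally and needs exactly your device (or an equivalent one, exploiting that $f_{n-1,n}$ knows $n$). The only nick is the empty word: with $S = \set{\tau \mid \tau(1) = \mathrm{acc}}$ your program rejects $\varepsilon$ even when $\varepsilon \in L$; add the identity to $S$ in that case, which is harmless since the product over any nonempty word sends the label of $\varepsilon$ into $\set{\mathrm{acc},\mathrm{rej}}$ and so is never the identity. Finally, your chosen implications ($1 \Rightarrow 2 \Rightarrow 3 \Rightarrow 4 \Rightarrow 5 \Rightarrow 2$ together with $3 \Rightarrow 1$) do close the full equivalence.
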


This extends the Myhill-Nerode theorem proposed in~\cite{KRSZ12},
which proves the equivalence between (3) and (5) for the special case of uniform predicates.

\begin{proof}
The implication $2 \Rightarrow 3$ is proved by determinizing automata with advice,
extending the classical powerset construction.
Let $\A = (Q,q_0,\delta,F)$ be a non-deterministic automaton with advice.
We construct the deterministic automaton with advice $\A' = (Q',\set{q_0},\delta',F')$,
where $Q'$ is the powerset of $Q$, the set of final states is 
$F' = \set{(n,S) \mid \exists q \in S, (n,q) \in F}$, and the transition function $\delta'$ is defined by:
\[
\delta'(i,n,S,a) = \set{q' \in Q \mid \exists q \in S, (i,n,q,a,q') \in \delta}\ .
\]
It is easy to see that $\A$ and $\A'$ are equivalent.

The implication $3 \Rightarrow 2$ is immediate from the definitions. The
implication $1 \Rightarrow 2$ requires us to show closure properties of automata with advice 
under union, projection and complementation.
The first two closures are obtained in the exact same way as in the classical case, we do not detail them here; 
for the third case, we rely on the equivalence between $3$ and $4$,
and complement deterministic (complete) automata with advice
by simply exchanging $F$ and its complement in $\N \times Q$.

The implication $2 \Rightarrow 1$ amounts to writing a formula checking for the existence of a run.
Let $\A = (Q,q_0,\delta,F)$ be a non-deterministic automaton with advice recognizing a language $L$.

Let $\liste{X}$ be a $Q$-tuple of monadic second-order variables.
We first need to express that $\liste{\mathbf{X}}$ partitions the set of all positions of the word.
This is easily expressed by the following formula, denoted $\chi(\liste{X})$:
\[
\forall x,\ \left(\bigvee_{q \in Q} x \in X_q \right)\ \wedge\ \bigvee_{q \in Q}
\left(x \in X_q \to \bigwedge_{q' \neq q \in Q} x \notin X_{q'} \right)\ .
\]
For each $q \in Q$, we define the predicates
$\mathbf{T}^{q,a,q'}$ by $\mathbf{T}^{q,a,q'}_n = \set{i \in \N \mid \delta(i,n,q,a) = q'}$
and $\mathbf{F}^q$ by $\mathbf{F}^q = \set{n \in \N \mid (n,q) \in F}$.

The $\MSO$ formula $\varphi$ in figure~\ref{form:3} checks for the existence of an accepting run,
and uses the predicate symbols $T^{q,a,q'}$ and $F^q$.
We have $L_{\varphi,\set{\mathbf{T}^{q,a,q'},\mathbf{F}^q}} = L$.
\begin{figure*}[!t]
\normalsize
\begin{equation*}
\exists \liste{X},\
\begin{cases}
\qquad \chi(\liste{X}) \\
\wedge\ \forall x,\quad \textrm{first}(x) \Longrightarrow x \in X_{q_0} \\
\wedge\ \forall x, \forall y,\quad  y = x+1 \wedge \bigwedge_{(q,a) \in Q \times A} \bigvee_{q' \in Q} \\
\quad \qquad T^{q,a,q'}(x) \wedge x \in X_q \wedge a(x) \wedge y \in X_{q'} \\
\wedge\ \forall x,\quad \textrm{last}(x) \Longrightarrow \bigvee_{q \in Q}\quad x \in X_q \wedge F^q(x)\ .
\end{cases}
\end{equation*}
\hrulefill
\caption{Formula checking for the existence of an accepting run.\label{form:3}}
\end{figure*}

For the implication $3 \Rightarrow 4$, let $\A$ be a deterministic automaton with advice.
Let $n = i + p$, and consider the mapping $t_{i,n} : A^i \rightarrow Q$ defined for $u = u_0 \cdots u_{i-1}$ by
$$t_{i,n}(u) = \delta(i-1,n,\delta(i-2,n,\cdots \delta(0,n,q_0,u_0) \cdots, u_{i-2}), u_{i-1})\ .$$
In words, $t_{i,n}(u)$ is the state reached by $\A$ while reading $u$ of length $i$ assuming that the total word will be of length $n$.
We argue that $t_{i,n}(u) = t_{i,n}(v)$ implies $u \sim_{L,p} v$: indeed, for $w \in A^p$, after reading $u$ or $v$,
the automaton $\A$ is in the same state, so it will either accept both $uw$ and $vw$ or reject both.
Note that $t_{i,n}$ can have at most $|Q|$ different values.
Consequently, the restriction of $\sim_{L,p}$ to words of length $i$ contains at most $|Q|$ equivalence classes.

We now prove the implication $4 \Rightarrow 3$, by constructing a deterministic automaton with advice.
Its set of states is $Q = \set{0,\ldots,K-1}$.
To each word $u$ and length $n \ge 0$ we associate $\lfloor u \rfloor_n \in Q$ such that 
if $u$ and $v$ both have length $i$, then $u \sim_{L,n-i} v$ if, and only if, $\lfloor u \rfloor_n = \lfloor v \rfloor_n$.
We set $\lfloor \varepsilon \rfloor_n = 0$ for all $n$; the initial state is $0$.
The transition function is defined by $\delta(i,n,\lfloor u \rfloor_n,a) = \lfloor ua \rfloor_n$,
for \textit{some} $u$ of length $i$. 
(This is well defined: if both $u$ and $v$ have length $i$ and $\lfloor u \rfloor_n = \lfloor v \rfloor_n$, 
then $u \sim_{L,n-i} v$, so $ua \sim_{L,n-i-1} va$, 
thus $\lfloor ua \rfloor_n = \lfloor va \rfloor_n$.)
The set of final states is $$F = \set{(n,\lfloor u \rfloor_0) \mid u \in A^n \cap L}\enspace.$$
We argue that this automaton recognizes $L$:
whenever it reads $u = u_0 \cdots u_{n-1}$, the corresponding run is
$$\rho = \lfloor \varepsilon \rfloor_n \lfloor u_0 \rfloor_n \lfloor u_0 u_1 \rfloor_n \cdots 
\lfloor u_0 \cdots u_{n-1} \rfloor_n\enspace,$$
which is accepting if, and only if, $u \in L$.

The implication $3 \Rightarrow 5$ is syntactical. 
Consider a deterministic automaton with advice $\A = (Q,q_0,\delta,F)$ recognizing a language $L$.
We define $M$ to be the monoid of functions from $Q$ to $Q$, with composition as multiplication. 
Define 
$$f_{i,n}:
\begin{cases} 
	A \to M \\
	a \mapsto (q \mapsto \delta(i,n,q,a))\ ,
\end{cases}	 
$$
and $S = \set{\phi \in M \mid \phi(q_0) \in F}$.
The one-scan program $(M,(f_{i,n})_{i,n \in \N},S)$ recognizes $L$.

The converse implication $5 \Rightarrow 3$ is also syntactical.

Consider a one-scan program $(M,(f_{i,n})_{i,n \in \N},S)$ recognizing a language $L$.
Define the deterministic automaton with advice $\A = (M,1,\delta,F)$
where $1$ is the neutral element of $M$, the transition function $\delta$
is defined by $\delta(i,n,m,a) = m \cdot f_{i,n}(a)$,
and $F = \set{(n,m) \mid m \in S}$. 
The automaton $\A$ recognizes the language $L$.
\end{proof}

%\vskip1em
%We now state similar characterizations in the uniform case.
%
%\begin{theorem}[Uniform Advice Regular Languages]\label{thm:regular_languages_uniform_predicates}
%Let $L$ be a language of finite words, the following properties are equivalent:
%\begin{enumerate}
%	\item[(1)] $L \in \MSO[\le,\MU]$,
%	\item[(2)] $L$ is recognized by a non-deterministic automaton with uniform advice,
%	\item[(3)] $L$ is recognized by a deterministic automaton with uniform advice,
%	\item[(4)] There exists $K \in \N$ such that for all $p \in \N$, 
%	the equivalence relation $\sim_{L,p}$ contains at most $K$ equivalence classes.
%	\item[(5)] $L$ is recognized by a uniform one-scan program,
%\end{enumerate}
%In such case, we say that $L$ is uniform advice regular.
%\end{theorem}
%The proof of this theorem is the same as for the previous theorem, simplified in some places,
%so we omit it.

\section{The Regularity Question}
\label{sec:substitution}
In this section, we address the following question: given an advice regular language, when is it regular?
We answer this question in two different ways:
first by showing a substitution property,
and second by proving the existence of a syntactical predicate.
Note that the regularity question is not a decision problem, 
as advice regular languages are not finitely presentable,
so we can only provide non-effective characterizations 
of regular languages inside this class.

In the next section, we will show applications of these two notions:
first by proving that the Straubing property holds in this case,
and second by proving the decidability of the regularity problem 
for morphic regular languages.

\subsection{A Substitution Property}
\label{subsec:sub}
In this subsection, we prove a substitution property for $\MSO[\le,\M]$ and for $\MSO[\le,\MU]$.
We start by defining the class of (monadic) \emph{regular predicates}.

The predicates $\set{c}$ and $\mathbf{last}-c = (\set{n-1-c})_{n \in \N}$ for a given $c \in \N$ are called local predicates.
The predicates $\set{x \mid x \equiv r \bmod{q}}$ and $\mathbf{last} \equiv r \bmod{q}$ 
for given $q,r \in \N$ are called modular predicates.

\begin{theorem}[\cite{Peladeau92,Straubing94}]
\label{thm:def_reg}
Let $\mathbf{P} = (\mathbf{P}_n)_{n \in \N}$ be a predicate.
The following properties are equivalent:
\begin{enumerate}
	\item There exists a formula $\varphi(x) \in \MSO[\leq]$ over the one-letter alphabet $\set{a}$ such that 
$\mathbf{P}_n = \set{\mathbf{x} \in \set{0,n-1} \mid\ a^n,\mathbf{x} \models \varphi(x)}$.
	\item The predicate $\mathbf{P}$ is a boolean combination of local and modular predicates.
	\item The language $\mathbf{P} \subseteq A^*$ is regular.
\end{enumerate}
In this case, we say that $\mathbf{P}$ is regular.
We denote by $\Reg_1$ the class of regular (monadic) predicates.
\end{theorem}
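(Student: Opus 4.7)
The plan is to close the cycle $(2) \Rightarrow (1) \Rightarrow (3) \Rightarrow (2)$.

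For $(2) \Rightarrow (1)$, the easy direction, I would write a defining formula for each building block and invoke closure of $\MSO$ under Boolean connectives. Over the one-letter alphabet, the local predicate $\set{c}$ is definable by a first-order formula asserting that $x$ has exactly $c$ strict predecessors, and $\mathbf{last}-c$ symmetrically by counting $c$ strict successors. For the modular predicate $\set{x \mid x \equiv r \bmod q}$, I would existentially quantify over a set variable $X$ guessing the positions congruent to $0$ modulo $q$, characterized by requiring $X$ to contain the first position and by enforcing that consecutive elements of $X$ are separated by exactly $q-1$ intermediate positions not in $X$; the residue of $x$ is then read off as its distance from the largest element of $X$ below or equal to $x$. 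The predicate $\mathbf{last} \equiv r \bmod q$ is handled in the same way using successors rather than predecessors.

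For $(1) \Rightarrow (3)$, I would transform $\varphi(x)$ syntactically into a sentence $\psi$ over $\ZO$. Because the underlying alphabet in $(1)$ is a singleton, every occurrence of the letter atom $\mathbf{a}(y)$ in $\varphi$ is automatically true and can be replaced by $\top$, producing a formula $\varphi'(x)$ whose only non-logical symbol is $\le$. Then $\psi := \forall x,\ \mathbf{1}(x) \Leftrightarrow \varphi'(x)$ is an $\MSO[\le]$ sentence over $\ZO$, hence defines a regular language by B\"uchi's theorem. By construction, a word of length $n$ satisfies $\psi$ iff its set of $1$-positions equals $\mathbf{P}_n$, so the defined language is exactly $\mathbf{P}$.

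The hard direction is $(3) \Rightarrow (2)$, which I expect to be the main obstacle. The crucial observation is that $\mathbf{P}$, viewed as a regular language, is \emph{thin}: at most one word of each length. Let $\mathcal{D} = (Q, q_0, \delta, F)$ be the minimal DFA of $\mathbf{P}$; thinness forces the accepting run of $\mathcal{D}$ on the unique length-$n$ word to be determined by $n$ alone, since at each step the current state is pinned down by the requirement that the prefix read so far can be completed in the remaining steps to an element of $F$. The plan is then to apply a pigeon-hole argument on $Q$ to the pair consisting of the set of states reachable in $i$ steps and the set of states that can reach $F$ in $n - i$ steps (both of which become ultimately periodic in $i$ and $n - i$ respectively). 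This yields a threshold $T$ and periods $p, q$ such that for $T \le i$ and $n - 1 - i \ge T$ the bit at position $i$ of $w_n$ depends only on the pair $(i \bmod p,\ (n-1-i) \bmod q)$, while for $i < T$ it depends only on $i$ and for $n - 1 - i < T$ only on $n - 1 - i$. Encoding each of these finitely many cases as a conjunction of local predicates (for the two boundary regions) and modular predicates (for the middle) and disjoining over the cases where the bit is $1$ exhibits $\mathbf{P}$ as a Boolean combination of local and modular predicates. The delicate point is choosing compatible periods $p, q$ so that the middle description is consistent when anchored either to the left or to the right end, which is essentially a Ramsey-type statement about the transition monoid of $\mathcal{D}$ restricted to the thin language $\mathbf{P}$.
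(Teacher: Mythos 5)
The paper itself does not prove this statement: it is quoted from P\'eladeau and Straubing, so there is no in-paper argument to compare with, and I evaluate your cycle on its own merits. Your directions $(2) \Rightarrow (1)$ and $(1) \Rightarrow (3)$ are correct and routine, and the core mechanism of your $(3) \Rightarrow (2)$ is the right one: with $S_i$ the set of states reachable from the initial state by some word of length $i$ and $T_j$ the set of states from which some word of length $j$ reaches $F$, thinness gives $|S_i \cap T_{n-i}| \le 1$ (two distinct such states would yield two distinct accepted words of length $n$), so the state after $i$ letters of the unique accepting run, and hence the bit $w_n(i)$, is a function of the pair $(S_i, T_{n-1-i})$; and both sequences are ultimately periodic since each term determines the next.

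The flaw is in your final case analysis. In the boundary regions the bit does \emph{not} depend only on $i$ (resp.\ only on $n-1-i$). Counterexample: let $\mathbf{P}_n = \set{0}$ if $n$ is even and $\mathbf{P}_n = \emptyset$ if $n$ is odd; as a language this is $\set{1 0^{n-1} \mid n \text{ even}} \cup \set{0^n \mid n \text{ odd}}$, which is regular and equals the boolean combination $x = 0 \wedge (\mathbf{last} \equiv 1 \bmod 2)$, yet the bit at position $0$ depends on the parity of the length, not on $i=0$ alone. The correct conclusion — which follows from the setup you already have, with no new idea — is that $w_n(i)$ is a function of the quadruple $(\min(i,T),\ i \bmod p,\ \min(n-1-i,T),\ (n-1-i) \bmod q)$, uniformly over all four regions; each such condition is a boolean combination of the allowed atoms, provided you also note that $(n-1-i) \equiv r \bmod q$ must itself be rewritten as a disjunction over pairs $a - b \equiv r \bmod q$ of conjunctions $(x \equiv b \bmod q) \wedge (\mathbf{last} \equiv a \bmod q)$, a step your sketch glosses over. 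Finally, the ``delicate point'' you anticipate is a non-issue: since the bit is determined directly by the pair (left data, right data), there is no compatibility-of-periods problem and no Ramsey-type argument about the transition monoid is needed — take the threshold to be the larger of the two thresholds and keep both periods (or their lcm). With these corrections your route is a perfectly good self-contained proof.
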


The following theorem states the substitution property for $\MSO[\le,\M]$.

\begin{theorem}\label{thm:substitution}
For all sentences $\varphi(\liste{P})$ in $\MSO[\le,\M]$ and predicates $\liste{\mathbf{P}} \in \M$
such that $L_{\varphi,\liste{\mathbf{P}}}$ is regular,
there exist $\liste{\mathbf{Q}} \in \Reg_1$ such that 
$L_{\varphi,\liste{\mathbf{Q}}} = L_{\varphi,\liste{\mathbf{P}}}$.
\end{theorem}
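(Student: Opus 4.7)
The plan is to view $\varphi(\liste{P})$ as a pure $\MSO[\le]$ sentence over the extended alphabet $B = A \times \set{0,1}^\ell$, where $\ell$ is the length of $\liste{P}$. Reinterpreting each atom $\mathbf{a}(x)$ as ``the $A$-coordinate at $x$ equals $a$'' and each atom $P_i(x)$ as ``the $i$-th bit of the $\set{0,1}^\ell$-coordinate at $x$ equals $1$'' turns $\varphi$ into an $\MSO[\le]$ sentence over $B$ with no predicate symbols, so it defines a regular language $L_\varphi \subseteq B^*$. For $u \in A^n$ and $c \in (\set{0,1}^\ell)^n$ encoding the values of $\liste{\mathbf{P}}_n$, we have $u, \liste{\mathbf{P}} \models \varphi$ iff the synchronous zip $u \otimes c$ belongs to $L_\varphi$. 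The task therefore reduces to producing a \emph{regular} advice $\liste{\mathbf{Q}}$ that plays the same role at every length as $\liste{\mathbf{P}}$ does.

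Write $L = L_{\varphi, \liste{\mathbf{P}}}$, which is regular by hypothesis, and $C = \set{0,1}^\ell$. Consider the set of admissible advices
\[
V = \set{c \in C^* : \forall u \in A^{|c|},\ (u \otimes c \in L_\varphi) \iff (u \in L)}.
\]
The encoding of $\liste{\mathbf{P}}$ lies in $V$, so $V_n \neq \emptyset$ for every $n$. The crucial step is to prove that $V$ is regular. For that, let $\pi_A : B^* \to A^*$ and $\pi_C : B^* \to C^*$ be the two coordinate letter-to-letter morphisms and set
\[
W = \bigl(L_\varphi \cap \pi_A^{-1}(L)\bigr) \cup \bigl((B^* \setminus L_\varphi) \cap \pi_A^{-1}(A^* \setminus L)\bigr),
\]
so that $w \in W$ iff $(w \in L_\varphi) \iff (\pi_A(w) \in L)$. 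As a Boolean combination of regular languages, $W$ is regular. By definition $c \in V$ iff no $w \in B^{|c|}$ with $\pi_C(w) = c$ lies outside $W$, which gives
\[
V = C^* \setminus \pi_C(B^* \setminus W).
\]
Since regular languages are closed under complement and under letter-to-letter morphic images, $V$ is regular.

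It remains to extract from $V$ a regular subset $V' \subseteq V$ with exactly one word per length, for instance the length-wise lexicographic minimum
\[
V' = V \setminus \set{c \in V : \exists c' \in V,\ |c'| = |c|,\ c' \lex c}.
\]
The inner existence is the second-coordinate projection of the regular synchronous relation $\set{c' \otimes c : c, c' \in V,\ c' \lex c} \subseteq (C \times C)^*$, so $V'$ is regular and each $V'_n$ is a singleton. Defining $\mathbf{Q}^i \subseteq \set{0,1}^*$ to be the projection of $V'$ onto its $i$-th bit gives, for each $i$, a regular language with exactly one word per length, hence a regular monadic predicate by Theorem~\ref{thm:def_reg}. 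By construction the encoding of $\liste{\mathbf{Q}}_n$ lies in $V_n$ for every $n$, so $L_{\varphi, \liste{\mathbf{Q}}} = L = L_{\varphi, \liste{\mathbf{P}}}$.

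The main obstacle is the regularity of $V$: it is defined by a universal quantification over same-length inputs, and stays within the regular class only because the quantification is synchronized to $|c|$ and because regular languages are closed under complement. An unsynchronized ``for all $u$'' would immediately take us out of the regular class.
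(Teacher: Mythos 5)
Your proof is correct and follows essentially the same route as the paper: you form the set of advices that define the same language (the paper's $M$, your $V$), prove it regular, select the length-wise lexicographically minimal word of each length to get a regular advice containing exactly one word per length (this is exactly the paper's Regular Choice Lemma, Lemma~\ref{lem:weak_choice}), and conclude via Theorem~\ref{thm:def_reg}. The only difference is presentational: where the paper establishes regularity of $M$ and of the minimal-word selector by writing explicit $\MSO$ formulae (universally quantifying over second-order variables encoding same-length words), you obtain the same facts through closure of regular languages under Boolean operations and letter-to-letter projections, which is an equivalent argument.
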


The main idea of the proof is that among all predicates $\liste{\mathbf{Q}}$
such that $L_{\varphi,\liste{\mathbf{P}}} = L_{\varphi,\liste{\mathbf{Q}}}$,
there is a minimal one with respect to a lexicographic ordering,
which can be defined by an $\MSO$ formula.
The key technical point is given by the following lemma,
which can be understood as a regular choice function.

\begin{lemma}[Regular Choice Lemma]\label{lem:weak_choice}
Let $M$ be a regular language such that for all $n \in \N$, there exists a word $w \in M$ of length $n$.
Then there exists a regular language $M' \subseteq M$ such that for all $n \in \N$, there exists exactly one word $w \in M'$
of length $n$.
\end{lemma}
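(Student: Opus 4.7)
The plan is to realize $M'$ as the set of lexicographically least words of each length in $M$, and then show that this selection is a regular operation.

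First I would fix an arbitrary linear order $<$ on $A$ and extend it to the length-preserving lexicographic order $\lex$ on $A^*$. Define
\[
M' \;=\; \{\, w \in M \,:\, \forall v \in M,\ |v| = |w| \Rightarrow w \leq_\mathrm{lex} v\,\}.
\]
Because $\lex$ is a total order on each $A^n$ and the hypothesis guarantees $M \cap A^n \neq \emptyset$ for all $n$, the set $M'$ contains exactly one word of each length; this is the easy half.

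The substantive step is to verify that $M'$ is regular. Let $\mathcal{A} = (Q, q_0, \delta, F)$ be a DFA recognizing $M$. I would build a DFA $\mathcal{A}'$ for $M'$ whose states are pairs $(q, S) \in Q \times 2^Q$. The intended invariant while reading a candidate word $w$ is that $q$ is the state reached by $w$ in $\mathcal{A}$, and $S$ is the set of states reachable by the prefixes of words $v$ of the same length as $w$ that have already strictly diverged below $w$ at some earlier position (and thus can continue arbitrarily). The initial state is $(q_0, \emptyset)$, and the transition on input $a$ is
\[
(q, S) \xrightarrow{a} \bigl(\delta(q, a),\; \{\delta(s, b) : s \in S,\ b \in A\} \cup \{\delta(q, b) : b < a\}\bigr).
\]
The accepting condition is $q \in F$ and $S \cap F = \emptyset$: the first clause asserts $w \in M$, the second that no strictly lex-smaller word of the same length belongs to $M$. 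A straightforward induction on $|w|$ shows that this DFA accepts exactly $M'$.

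I expect no serious obstacle: the only delicate point is getting the update rule for $S$ correct, in particular remembering to add the new divergences $\{\delta(q, b) : b < a\}$ in addition to propagating the old ones. Once the product construction is in place, the correctness proof is a routine induction and the finiteness of the state space $Q \times 2^Q$ immediately yields the regularity of $M'$.
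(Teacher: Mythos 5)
Your proposal is correct, and it rests on the same key idea as the paper's proof: take $M'$ to be the set of lexicographically minimal words of each length in $M$. The difference is in how regularity of this selection is established. The paper stays inside the logical framework: it takes an $\MSO$ formula $\psi$ for $M$, encodes a competitor word of the same length by a tuple of second-order variables (with a partition formula $\chi$ and a formula $\theta$ expressing $w \preceq v$), and writes a single $\MSO$ sentence saying ``$w$ satisfies $\psi$ and every same-length word satisfying $\psi$ is lexicographically $\geq w$,'' then invokes the equivalence of $\MSO$ and regularity. You instead give a direct automaton construction: a DFA over $Q \times 2^Q$ tracking the state reached by the candidate word together with the set of states reached by all strictly lex-smaller words of the same length, with the update rule $(q,S) \mapsto \bigl(\delta(q,a),\ \{\delta(s,b) : s \in S,\ b \in A\} \cup \{\delta(q,b) : b < a\}\bigr)$ and acceptance $q \in F$, $S \cap F = \emptyset$; this correctly maintains the invariant and yields $M'$. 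Your route is more elementary and self-contained (no appeal to B\"uchi's theorem, and it gives an explicit state bound $|Q| \cdot 2^{|Q|}$), whereas the paper's formula-based argument fits more smoothly into the surrounding development, which reuses the same second-order ``word representation'' machinery in the proof of the substitution theorem. Either way the easy half---exactly one word per length, by totality of the lexicographic order on each $A^n$ and nonemptiness of $M \cap A^n$---is handled identically.
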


\begin{proof}
We equip the alphabet $A$ with a total order, inducing the lexicographic ordering $\preceq$ on $A^*$.

Let $\psi$ be an $\MSO$ formula defining $M$.
The objective is to define an $\MSO$ formula $\Psi$ such that 
$w$ satisfies $\Psi$ if, and only if, $w$ is minimal among the words of its length to satisfy $\psi$
with respect to $\preceq$.
The language defined by this formula satisfies the desired properties.

First, let $\liste{X}$ be a $A$-tuple of monadic second-order variables.
We say that $\liste{\mathbf{X}} \in A^n$ represents the word $v \in A^n$ 
if $\liste{\mathbf{X}}$ partitions the set of all positions
and for all $i \in \set{0,\ldots,n-1}, a \in A$, we have $v_i = a$ if, and only if, $i \in \mathbf{X}_a$.
The formula expressing that $\liste{\mathbf{X}}$ partitions the set of all positions is denoted by $\chi(\liste{X})$ 
(see the proof of Theorem~\ref{thm:advice_regular_languages} for the definition of this formula).

We obtain a formula $\varphi(\liste{X})$ from $\psi$ by syntactically replacing in $\psi$
each letter predicate $\mathbf{a}(x)$ by $x \in X_a$.
The following property holds:
for all $\liste{\mathbf{X}} \in A^*$, if $\liste{\mathbf{X}}$ represents the word $v$,
then $\liste{\mathbf{X}} \models \varphi(\liste{X})$ is equivalent to $v \models \psi$.

Now, we define a formula $\theta(\liste{X})$ such that for all words $w \in A^*$
and $\liste{\mathbf{X}} \in A^*$, if $\liste{\mathbf{X}}$ represents a word $v$, 
then $w, \liste{\mathbf{X}} \models \theta(\liste{X})$ if, and only if, 
$w \preceq v$.
There are two cases: either $w = v$, or $w \prec v$, so the formula $\theta(\liste{X})$ is a disjunction of two formulae,
the first stating that $\liste{\mathbf{X}}$ represents $w$:
$$\forall x,\ \bigwedge_{a \in A} \left( x \in X_a \iff \mathbf{a}(x)\right)\ ,$$
and the second stating that $w \prec v$, where $v$ is represented by $\liste{\mathbf{X}}$:
$$\exists x, \left(\bigvee_{a < b \in A} \mathbf{a}(x) \wedge x \in X_b \right) \wedge 
\left(\forall y,\ y < x \to \left(\bigvee_{a \in A} \mathbf{a}(y) \wedge y \in X_a \right) \right)\ .$$

The $\MSO$ formula $\Psi$ that selects the minimal word in $M$ is given by:
$$\psi\ \wedge\ \forall \liste{X},\ \left(\chi(\liste{X}) \wedge \varphi(\liste{X})\right) \to \theta(\liste{X})\ .$$
This concludes the proof.
\end{proof}

We now prove Theorem~\ref{thm:substitution} relying on Lemma~\ref{lem:weak_choice}.

\begin{proof}
Consider $\varphi(\liste{P})$ a sentence and $\liste{\mathbf{P}}$ predicates
such that $L_{\varphi,\liste{\mathbf{P}}}$ is regular. We write $L$ for $L_{\varphi,\liste{\mathbf{P}}}$.
Let $\theta$ be an $\MSO$ formula defining $L$.

Consider the language $M = \set{\liste{\mathbf{X}} \in (\ZO^\ell)^* \mid L_{\varphi,\liste{\mathbf{X}}} = L}$.

We first argue that $M$ is regular.

As in the proof of Lemma~\ref{lem:weak_choice}, we introduce $\liste{Y}$ a $A$-tuple of monadic second-order variables,
used to represent words in $A^*$.
We obtain a formula $\psi(\liste{Y})$ from $\theta$ by syntactically replacing in $\theta$
each letter predicate $\mathbf{a}(x)$ by $x \in X_a$.
The following property holds:
for all $\liste{\mathbf{Y}} \in A^*$, if $\liste{\mathbf{Y}}$ represents the word $v$,
then $\liste{\mathbf{Y}} \models \psi(\liste{Y})$ is equivalent to $v \models \theta$.

Consider the following formula in $\MSO[\le]$ over the alphabet $\ZO^\ell$:
\[
\forall \liste{Y},\ \chi(\liste{Y}) \implies \left( \varphi(\liste{X}) \Longleftrightarrow \psi(\liste{Y}) \right)\ ,
\]
it describes the language $M$: the word $\liste{\mathbf{X}} \in (\ZO^\ell)^n$ satisfies this formula if
for any word $v \in A^n$ represented by $\liste{\mathbf{Y}}$, $v$ is in $L$ if, and only if, $v,\liste{\mathbf{X}} \models \varphi(\liste{X})$.

\medskip
Now, we note that for all $n \in \N$, there exists a word in $M$ of length $n$, namely $\liste{\mathbf{P}}_n$.
Thus Lemma~\ref{lem:weak_choice} applies, so there exists $M' \subseteq M$ a regular language
so that for all $n \in \N$, there exists a unique word in $M'$ of length $n$,
which we denote by $\liste{\mathbf{Q}}_n$.
Thanks to Theorem~\ref{thm:def_reg}, this yields a tuple of regular predicates $\liste{\mathbf{Q}}$ such that 
$L_{\varphi,\liste{\mathbf{Q}}} = L_{\varphi,\liste{\mathbf{P}}}$.
\end{proof}

We proved the substitution property for $\MSO[\le,\M]$.
We now prove that it also holds for $\MSO[\le,\MU]$; note that this is not implied by the previous case.
We first prove it over infinite words, and then transfer the result to finite words.

\begin{theorem}\label{thm:sub_infinite_words}
For all sentences $\varphi(\liste{P})$ in $\MSO[\le,\MU]$ and predicates $\liste{\mathbf{P}} \in \MU$
such that $L_{\varphi,\liste{\mathbf{P}}}$ is $\omega$-regular,
there exist regular predicates $\liste{\mathbf{Q}} \in \MU$ such that 
$L_{\varphi,\liste{\mathbf{P}}} = L_{\varphi,\liste{\mathbf{Q}}}$.
\end{theorem}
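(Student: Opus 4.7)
My plan is to adapt the proof of Theorem~\ref{thm:substitution} to the infinite-word setting, replacing the Regular Choice Lemma by the classical fact that every non-empty $\omega$-regular language contains an ultimately periodic word.

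Fix a sentence $\varphi(\liste{P})$ and uniform predicates $\liste{\mathbf{P}} \in \MU$ such that $L = L_{\varphi,\liste{\mathbf{P}}}$ is $\omega$-regular, and let $\theta$ be an $\MSO$ sentence over $A^\omega$ defining $L$. Viewing a tuple of uniform predicates as an infinite word over $\ZO^\ell$, I consider
\[
M = \set{ \liste{\mathbf{X}} \in (\ZO^\ell)^\omega \mid L_{\varphi,\liste{\mathbf{X}}} = L }\ .
\]
The first step is to show that $M$ is $\omega$-regular. Following the proof of Theorem~\ref{thm:substitution}, I introduce an $A$-tuple $\liste{Y}$ of monadic second-order variables meant to represent an infinite word over $A$, and obtain $\varphi(\liste{X},\liste{Y})$ and $\psi(\liste{Y})$ from $\varphi$ and $\theta$ respectively by syntactically replacing each letter atom $\mathbf{a}(x)$ by $x \in Y_a$ (and, in $\varphi$, each predicate atom $P^i(x)$ by $x \in X^i$). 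The $\MSO$ formula
\[
\forall \liste{Y},\ \chi(\liste{Y}) \implies \left( \varphi(\liste{X},\liste{Y}) \Longleftrightarrow \psi(\liste{Y}) \right)
\]
over $(\ZO^\ell)^\omega$ with free variables $\liste{X}$ then defines $M$, and by B\"uchi's theorem over $\omega$-words, $M$ is $\omega$-regular.

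Since $\liste{\mathbf{P}} \in M$, the language $M$ is non-empty, so it contains an ultimately periodic word $\liste{\mathbf{Q}}$ of the form $u v^\omega$. Each component of $\liste{\mathbf{Q}}$, viewed as a subset of $\N$, is then a Boolean combination of a local predicate (capturing the prefix $u$) and modular predicates (capturing the periodic tail $v$). By Theorem~\ref{thm:def_reg}, $\liste{\mathbf{Q}}$ is therefore a tuple of regular uniform predicates, and by construction of $M$ we have $L_{\varphi,\liste{\mathbf{Q}}} = L_{\varphi,\liste{\mathbf{P}}}$, as required.

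The only slightly delicate ingredient is the $\omega$-regularity of $M$, but once the second-order encoding of $v \in A^\omega$ by $\liste{Y}$ is set up, the formula above goes through verbatim as in the finite-word case, and B\"uchi's theorem closes it. The extraction of an ultimately periodic witness is a classical fact about $\omega$-regular languages, and converting such a witness into a tuple of regular uniform predicates is routine via Theorem~\ref{thm:def_reg}.
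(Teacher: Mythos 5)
Your proposal is correct and follows essentially the same route as the paper: define $M$ as the set of predicate tuples (viewed as $\omega$-words over $\ZO^\ell$) defining the same language, show it is $\omega$-regular by the same formula construction as in Theorem~\ref{thm:substitution}, note it is non-empty since it contains $\liste{\mathbf{P}}$, and extract an ultimately periodic word via B\"uchi's theorem, which yields regular uniform predicates. Your extra remark spelling out why an ultimately periodic word gives a Boolean combination of local and modular predicates (via Theorem~\ref{thm:def_reg}) is a fine elaboration of a step the paper states without detail.
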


%Denote $L = L_{\varphi,\liste{\mathbf{P}}}$.
%We say that the predicates $\liste{\mathbf{X}}$ define $L$ with respect to $\varphi$
%if $L_{\varphi,\liste{\mathbf{X}}} = L$.
%The technical core of the proof is to show that the predicates $\liste{\mathbf{X}}$ defining $L$ 
%with respect to $\varphi$ is an $\omega$-regular language.
%Since it is not empty, B\"uchi's Theorem~\cite{Buchi62} implies that it contains a ultimately periodic word,
%or in other words a tuple of regular monadic predicates which substituted in $\varphi$ defines the language~$L$.

\begin{proof}
Consider $\varphi(\liste{P})$ a sentence in $\MSO[\le,\MU]$
and $\liste{\mathbf{P}}$ predicates such that $L_{\varphi,\liste{\mathbf{P}}}$ is $\omega$-regular,
denote it $L$.

Consider the following language:
$$M = \set{\liste{\mathbf{X}} \in (\set{0,1}^\ell)^\omega 
\mid \ L_{\varphi,\liste{\mathbf{X}}} = L}\ . $$
Relying on the same arguments as in the proof of Theorem~\ref{thm:substitution},
we show that $M$ is $\omega$-regular.
It is also non-empty since it contains $\liste{\mathbf{P}}$.
It follows from B{\"u}chi's Theorem that it contains a ultimately periodic word $\liste{\mathbf{Q}}$.
Seen as predicates, $\liste{\mathbf{Q}}$ are regular monadic uniform predicates, 
and $L_{\varphi,\liste{\mathbf{P}}} = L_{\varphi,\liste{\mathbf{Q}}}$,
which concludes the proof.
\end{proof}

The following theorem states the substitution property for $\MSO[\le,\MU]$.

\begin{theorem}
\label{thm:sub_uniform}
For all sentences $\varphi(\liste{P})$ in $\MSO[\le,\MU]$ and predicates $\liste{\mathbf{P}} \in \MU$
such that $L_{\varphi,\liste{\mathbf{P}}}$ is regular,
there exist regular predicates $\liste{\mathbf{Q}} \in \MU$ such that 
$L_{\varphi,\liste{\mathbf{P}}} = L_{\varphi,\liste{\mathbf{Q}}}$.
\end{theorem}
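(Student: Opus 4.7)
The plan is to reduce the finite-word statement to its infinite-word counterpart, Theorem~\ref{thm:sub_infinite_words}, via a standard end-marker encoding. I would encode each finite word $u \in A^*$ as the $\omega$-word $u \cdot \$^\omega$ over the extended alphabet $A \cup \{\$\}$, where $\$$ is a fresh symbol. The crucial point that makes this transfer work for uniform predicates is that a uniform monadic predicate is simply a subset of $\N$, so its evaluation at a position $i$ does not depend on whether the ambient word is finite of length $|u|$ or the infinite padding $u \$^\omega$.

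First I would construct from $\varphi(\liste{P})$ a sentence $\widetilde{\varphi}(\liste{P})$ in $\MSO[\le,\MU]$ over the alphabet $A \cup \{\$\}$, interpreted on $\omega$-words, by (i) asserting that the input has the shape $u \cdot \$^\omega$, i.e.\ there is a position $y$ such that every $z < y$ carries a letter of $A$ and every $z \ge y$ carries $\$$, and (ii) relativizing every first- and second-order quantifier of $\varphi$ to positions strictly below $y$. Letter predicates over $A$ and the monadic predicate symbols $P(x)$ are left untouched. A straightforward induction on $\varphi$ then shows, for any uniform predicates $\liste{\mathbf{R}} \in \MU$, that
\[
L_{\widetilde{\varphi}, \liste{\mathbf{R}}} \;=\; L_{\varphi, \liste{\mathbf{R}}} \cdot \$^\omega.
\]

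Applying this identity with $\liste{\mathbf{R}} = \liste{\mathbf{P}}$ and using the hypothesis that $L_{\varphi, \liste{\mathbf{P}}}$ is regular (so its concatenation with $\$^\omega$ is $\omega$-regular), Theorem~\ref{thm:sub_infinite_words} yields regular uniform predicates $\liste{\mathbf{Q}} \in \MU$ with $L_{\widetilde{\varphi}, \liste{\mathbf{Q}}} = L_{\widetilde{\varphi}, \liste{\mathbf{P}}}$. Applying the identity again with $\liste{\mathbf{R}} = \liste{\mathbf{Q}}$, and using that the encoding $u \mapsto u \$^\omega$ is injective, gives $L_{\varphi, \liste{\mathbf{Q}}} = L_{\varphi, \liste{\mathbf{P}}}$, as required.

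The only step requiring any care is the relativization $\varphi \mapsto \widetilde{\varphi}$ and the induction verifying its semantic equation; this is routine and not a genuine obstacle. The real content of the theorem has already been packaged into Theorem~\ref{thm:sub_infinite_words}, whose proof extracts an ultimately periodic witness via B\"uchi's theorem, and the statement for finite words then follows as a clean transfer.
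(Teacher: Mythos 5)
Your proof is correct and follows essentially the same route as the paper: pad finite words with a fresh end letter ($\$$ in your version, $\flat$ in the paper), relativize the quantifiers of $\varphi$ to the finite prefix so that $L_{\widetilde{\varphi},\liste{\mathbf{R}}} = L_{\varphi,\liste{\mathbf{R}}} \cdot \$^\omega$ for uniform $\liste{\mathbf{R}}$, and then invoke Theorem~\ref{thm:sub_infinite_words} on the resulting $\omega$-regular language to extract regular uniform predicates. The only cosmetic differences are that the paper guards only first-order quantifiers and adds the shape constraint as a separate conjunct, while you relativize both quantifier kinds and build the shape constraint into $\widetilde{\varphi}$; both variants are sound.
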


\begin{proof}
We consider $\varphi(\liste{P})$ a sentence in $\MSO[\le,\MU]$ 
and predicates $\liste{\mathbf{P}}$ such that $L_{\varphi,\liste{\mathbf{P}}}$ is regular.
Let $\flat$ be a fresh letter (not in $A$), we denote by $A_\flat$ the alphabet $A \cup \set{\flat}$.
We explain how to transform $\varphi(\liste{P})$ into a formula $\widehat{\varphi}(\liste{P})$ 
in $\MSO[\le,\MU]$ over the alphabet $A_\flat$ satisfying:
for all $u$ in $A^*$ and predicates $\liste{\mathbf{Q}}$, we have:
\begin{equation}\label{eq:finite_infinite}
u,\liste{\mathbf{Q}} \models \varphi \ \Longleftrightarrow\ 
u \cdot \flat^\omega,\liste{\mathbf{Q}} \models \widehat{\varphi}
\end{equation}
$\widehat{\varphi}(\liste{P})$ is obtained from $\varphi(\liste{P})$ by guarding every first-order quantifiers:
the subformula $\exists y, \theta(y)$ is turned into the subformula
$\exists y, \neg \flat(y) \wedge \widehat{\theta}(y)$.
The equivalence~(\ref{eq:finite_infinite}) is easily proved by induction.
Consider the formula $\psi(\liste{P})$ defined by 
$$\widehat{\varphi}(\liste{P})\ \wedge\ \exists y,\quad \left(\forall x \geq y,\ \flat(x)\ \wedge\ \forall x < y,\ \neg \flat(x)\right)$$
and the predicates $\liste{\mathbf{P}}$,
they define the language $L_{\varphi,\liste{\mathbf{P}}} \cdot \flat^\omega$
thanks to the equivalence~(\ref{eq:finite_infinite}),
so it is $\omega$-regular.

From Theorem~\ref{thm:sub_infinite_words},
we get a tuple of regular predicates $\liste{\mathbf{Q}} \in \MU$ such that 
$L_{\psi,\liste{\mathbf{Q}}} = L_{\varphi,\liste{\mathbf{P}}} \cdot \flat^\omega$.
It follows that $L_{\varphi,\liste{\mathbf{P}}} = L_{\varphi,\liste{\mathbf{Q}}}$, which concludes the proof.
\end{proof}

We note that the substitution property does not hold over binary predicates. 
In fact, one can show much worse: given $M$ a deterministic Turing machine, 
one can construct a universal formula $\varphi_M(\liste{P})$ (\textit{i.e.} with only universal quantifiers)
with binary predicates such that $L_{\varphi_M,\liste{\mathbf{P}}} = a^*$ if and only if $\liste{\mathbf{P}}$ represents the run of $M$.
In other words, even if the language of the formula is rather simple, it can use its predicates to perform arbitrarily complicated computations.

\subsection{The Syntactical Predicate}
\label{subsec:syntactical_predicate}
In this subsection, we define the notion of syntactical predicate for an advice regular language.
The word ``syntactical'' here should be understood in the following sense:
the syntactical predicate $\mathbf{P}_L$ of $L$ is the most regular predicate that describes the language $L$.
In particular, we will prove that $L$ is regular if, and only if, $\mathbf{P}_L$ is regular.

Let $L$ be an advice regular language.
We define the predicate $\mathbf{P}_L = (\mathbf{P}_{L,n})_{n \in \N}$.
Thanks to Theorem~\ref{thm:advice_regular_languages},
there exists $K \in \N$ such that for all $i,p \in \N$, the restriction
of $\sim_{L,p}$ to words of length $i$ contains at most $K$ equivalence classes.
Denote $Q = \set{0,\ldots,K-1}$ and $\Sigma = (Q \times A \to Q) \uplus Q$, 
where $Q \times A \to Q$ is the set of functions from $Q \times A$ to $Q$.
We define $\mathbf{P}_{L,n} \in \Sigma^n$.

Let $i,n \in \N$.
Among all words of length $i$, we denote by $u^{i,n}_1,u^{i,n}_2,\ldots$ the lexicographically minimal representatives 
of the equivalence classes of $\sim_{L,n-i}$, enumerated in the lexicographic order:
\begin{equation}\label{eq:lexicographic_order}
u^{i,n}_1 \lex u^{i,n}_2 \lex u^{i,n}_3 \lex \cdots
\end{equation}
In other words, $u^{i,n}_q$ is minimal with respect to the lexicographic order $\lex$ among all words of length $i$
in its equivalence class for $\sim_{L,n-i}$.
Thanks to Theorem~\ref{thm:advice_regular_languages}, there are at most $K$ such words for each $i,n \in \N$.

We define $\mathbf{P}_{L,n}(i)$ (the $i$\textsuperscript{th} letter of $\mathbf{P}_{L,n}$) by:

\begin{equation}\label{eq:p_l1}
\mathbf{P}_{L,n}(i)(q,a) = q'\ \textrm{ if }\ u^{i,n}_q \cdot a \sim_{L,n-i-1} u^{i+1,n}_{q'}\ , \textrm{ for } i < n - 1
\end{equation}
\begin{equation}\label{eq:p_l2}
\mathbf{P}_{L,n}(n-1)(q)\ \textrm{ if }\ u^{n,n}_q \in L\ .
\end{equation}

Intuitively, the predicate $\mathbf{P}_{L}$ describes the transition function with respect to the equivalence relations $\sim_{L,p}$.
We now give an example.

\begin{example}
\begin{figure}
\centering
\begin{tikzpicture}[every node/.style={circle,inner sep=0pt,minimum size = .8cm}]
    \fill[gray!25] (0,0) ellipse (0.75 and 1.5);
    \fill[gray!25] (2.5,0) ellipse (0.75 and 1.5);
	\fill[gray!25] (5,0) ellipse (0.75 and 1.5);
 	\fill[gray!25] (7.5,0) ellipse (0.75 and 1.5);

	\node at (0,2) {$0$};
	\node at (2.5,2) {$1$};
	\node at (5,2) {$2$};
	\node at (7.5,2) {$3$};

%	\node at (10.5,1.5) {\begin{huge}$\cdots$\end{huge}};
%	\node at (10.5,0) {\begin{huge}$\cdots$\end{huge}};
%	\node at (10.5,-1.5) {\begin{huge}$\cdots$\end{huge}};

	\node[fill=gray!60] (a) at (0,1) {$a$};
	\node[fill=gray!60] (b) at (0,-1) {$b$};
	\node[fill=gray!60] (aa) at (2.5,1) {$aa$};
	\node[fill=gray!60] (ab) at (2.5,0) {$ab$};
	\node[fill=gray!60] (ba) at (2.5,-1) {$ba$};

	\node[fill=gray!60] (aaa) at (5,1) {$a^3$};
	\node[fill=gray!60] (aba) at (5,0) {$aba$};
	\node[fill=gray!60] (bab) at (5,-1) {$bab$};

	\node[fill=gray!60] (aq) at (7.5,1) {$a^4$};
	\node[fill=gray!100] (abq) at (7.5,-1) {$(ab)^2$};

%	\node (aaq) at (10,1) {};
%	\node (aabq) at (10,0) {};
%	\node (baaq) at (10,-1) {};

  \draw (a) edge [bend left = 35]  node[fill = white,minimum size = 0.5cm] {$a$} (aa);
  \draw (a) edge [bend left = 25]  node[fill = white,minimum size = 0.5cm] {$b$} (ab);
  \draw (b) edge [bend left = -10]  node[fill = white,minimum size = 0.5cm] {$b$} (aa);
  \draw (b) edge [bend left = -35]  node[fill = white,minimum size = 0.5cm] {$a$} (ba);

  \draw (aa) edge [bend left = 35]  node[fill = white,minimum size = 0.5cm] {$a,b$} (aaa);
  \draw (ab) edge [bend left = 25]  node[fill = white,minimum size = 0.5cm] {$a$} (aba);
  \draw (ab) edge [bend left = 25]  node[fill = white,minimum size = 0.5cm] {$b$} (aaa);
  \draw (ba) edge [bend left = -10]  node[fill = white,minimum size = 0.5cm] {$a$} (aaa);
  \draw (ba) edge [bend left = -35]  node[fill = white,minimum size = 0.5cm] {$b$} (bab);

  \draw (aaa) edge [bend left = 35]  node[fill = white,minimum size = 0.5cm] {$a,b$} (aq);
  \draw (aba) edge [bend left = 25]  node[fill = white,minimum size = 0.5cm] {$a$} (aq);
  \draw (aba) edge [bend left = -10]  node[fill = white,minimum size = 0.5cm] {$b$} (abq);
  \draw (bab) edge [bend left = 5]  node[fill = white,minimum size = 0.5cm] {$a,b$} (aq);
%  \draw (bab) edge [bend left = 10]  node[fill = white,minimum size = 0.5cm] {$b$} (aq);

%  \draw (aq) edge [bend left = 35]  node[fill = white,minimum size = 0.5cm] {$a,b$} (aaq);
%  \draw (abq) edge [bend left = 25]  node[fill = white,minimum size = 0.5cm] {$a$} (aaq);
%  \draw (abq) edge [bend left = 25]  node[fill = white,minimum size = 0.5cm] {$b$} (aabq);
%  \draw (baq) edge [bend left = -35]  node[fill = white,minimum size = 0.5cm] {$a$} (baaq);
%  \draw (baq) edge [bend left = -10]  node[fill = white,minimum size = 0.5cm] {$b$} (aaq);
\end{tikzpicture}
\caption{The predicate $\mathbf{P}_L$ (here $\mathbf{P}_{L,4}$) for $L = (ab)^* + (ba)^* b$.\label{fig:syntactical_predicate}}
\end{figure}
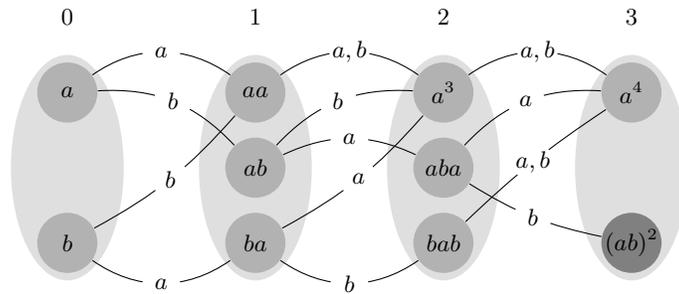
Consider the language $L = (ab)^* + (ba)^* b$. We represent $\mathbf{P}_{L,4}$ in figure~\ref{fig:syntactical_predicate}.
Each circle represents an equivalence class with respect to $\sim_{L,4}$, inside words of a given length.
For instance, there are three equivalence classes for words of length $3$: 
$a^3, aba$ and $bab$. 
Note that these three words are the minimal representatives of their equivalence classes with respect to the lexicographic order.
For the last position (here $3$), the equivalence class of $(ab)^2$ (which is actually reduced to $(ab)^2$ itself) is darker
since it belongs to the language $L$.
\end{example}

We state the main property of the predicate $\mathbf{P}_L$.

\begin{theorem}
\label{thm:syntactical_predicate}
Let $L$ be an advice regular language.
Then $L$ is regular if, and only if, $\mathbf{P}_L$ is regular.
\end{theorem}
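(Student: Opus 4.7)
The plan is to prove each direction separately, leveraging Theorem~\ref{thm:def_reg} to translate between regularity of a predicate and its $\MSO[\le]$-definability.

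For the direction $\mathbf{P}_L \in \Reg_1 \Rightarrow L \in \REG$, the letters of $\mathbf{P}_L$ encode exactly the transition function and acceptance condition of a canonical deterministic automaton with advice recognising $L$, whose states are $\set{0,\ldots,K-1}$. I would express acceptance by this automaton as an $\MSO$ formula in the style of Figure~\ref{form:3}, with atomic subformulas of the form ``the $i$-th letter of $\mathbf{P}_{L,n}$ equals $\sigma$'' for $\sigma \in \Sigma$. If $\mathbf{P}_L$ is regular, Theorem~\ref{thm:def_reg} provides a formula $\chi_\sigma(x) \in \MSO[\le]$ for each such atom, and substituting these in produces an $\MSO[\le]$ sentence for $L$.

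For the converse $L \in \REG \Rightarrow \mathbf{P}_L \in \Reg_1$, I would analyse $\mathbf{P}_L$ through the minimal DFA $\mathcal{A}_L = (Q_L, q_0, \delta_L, F_L)$ of $L$. Define on $Q_L$ the equivalence $s \equiv_p s'$ iff $s$ and $s'$ accept the same length-$p$ continuations, and let $R_i = \delta_L^*(q_0, A^i)$ be the set of states reachable in $i$ steps. A direct check gives $u \sim_{L,p} v$ iff $\delta_L^*(q_0,u) \equiv_p \delta_L^*(q_0,v)$, so the $\sim_{L,n-i}$-classes of length-$i$ words are in bijection with $R_i/\equiv_{n-i}$. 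Both $(R_i)_i$ and $(\equiv_p)_p$ live in finite sets with self-referential recursions (successor-set on subsets; refinement on equivalences), so each is ultimately periodic in its parameter, and for $i$ and $n-i$ beyond a threshold the partition $R_i/\equiv_{n-i}$ depends only on $(i \bmod T_R,\ (n-i) \bmod T_\equiv)$.

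The remaining ingredient is the lex-ordering used to index the classes as $q = 0,1,\ldots$. I would track the finite datum $O_i$ on $Q_L \times Q_L$ recording, for each pair $(s,t)$, the relative lex order of the lex-minimal length-$i$ paths between pairs of states; a first-letter decomposition gives $O_{i+1}$ as a function of $O_i$ and the reachability relation $E_i(s,t) = \bigl[\exists \text{ length-}i \text{ path from }s\text{ to }t\bigr]$, both of which are ultimately periodic in $i$. Combined with the previous step, $\mathbf{P}_{L,n}(i)$ is constant on congruence classes of $(i, n-i)$ outside a bounded transient, so each atomic predicate ``$\mathbf{P}_{L,n}(i) = \sigma$'' is a Boolean combination of local predicates (for small $i$ or small $n-i$) and modular predicates (for the periodic regime), hence regular by Theorem~\ref{thm:def_reg}. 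The main obstacle is precisely this lex-ordering step: since the lex-minimal length-$i$ paths grow unboundedly with $i$, periodicity cannot hold at the level of words, and one must arrange the bookkeeping so that only their mutual order—which lives in the finite set of orderings on $Q_L \times Q_L$—is tracked, and then verify it propagates through a recursion driven solely by ultimately periodic auxiliary data.
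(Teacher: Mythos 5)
Your proposal is correct, and while the easy direction matches the paper, the hard direction takes a genuinely different route. For ``$\mathbf{P}_L$ regular implies $L$ regular'' you do essentially what Lemma~\ref{lem:p_l} does: write an $\MSO$ formula simulating the run of the automaton encoded by $\mathbf{P}_L$ and then use Theorem~\ref{thm:def_reg} to replace the atoms ``$\mathbf{P}_{L,n}(i)=\sigma$'' by $\MSO[\le]$ formulae. For the converse the paper does not analyse an automaton at all: it proves the stronger Lemma~\ref{lem:p_l_converse}, namely that $\mathbf{P}_L \in \MSO[\le,\liste{\mathbf{P}}]$ for \emph{any} predicates $\liste{\mathbf{P}}$ with which $L$ is defined, by an $\MSO$ interpretation (second-order variables encode prefixes and continuations, which lets one express $u \sim_{L,n-i} v$, lexicographic minimality, and the index $q$ of a class via counting formulae $\gamma_q$); regularity of $\mathbf{P}_L$ then follows by taking $\theta \in \MSO[\le]$ and applying Theorem~\ref{thm:def_reg}. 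You instead work combinatorially in the minimal DFA: the correspondence $u \sim_{L,p} v \iff \delta_L^*(q_0,u) \equiv_p \delta_L^*(q_0,v)$ is right, the sequences $(R_i)_i$, $(\equiv_p)_p$ and the pair $(O_i,E_i)$ each evolve by a fixed map on a finite set and are hence ultimately periodic, and your first-letter recursion for $O_{i+1}$ is sound because all compared words have equal length, so the lex-minimal letter is simply the least letter admitting a continuation recorded in $E_i$; distinct states also have distinct lex-minimal access words, so the class indexing is well defined. The conclusion that each letter predicate of $\mathbf{P}_L$ is a Boolean combination of local and modular predicates then gives regularity via Theorem~\ref{thm:def_reg} (do also cover the final acceptance letter $\mathbf{P}_{L,n}(n-1)$ and the finitely many short lengths, which the same periodicity data handles). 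The trade-off: your argument is more elementary and yields an explicit ultimately periodic description of $\mathbf{P}_L$, but it is confined to the case where $L$ is regular, whereas the paper's interpretation lemma transfers definability of $\mathbf{P}_L$ for arbitrary advices and is exactly what is reused for the morphic application (Theorem~\ref{thm:morphic} needs $\mathbf{P}_L \in \MSO[\le,\HDOL]$), so your route would not replace Lemma~\ref{lem:p_l_converse} there.
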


The proof is split in two lemmas, giving each direction. 
We start by the \emph{if} direction.

\begin{lemma}\label{lem:p_l}
Let $L$ be an advice regular language.
Then $L \in \MSO[\le,\mathbf{P}_L]$.
\end{lemma}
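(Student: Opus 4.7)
The plan is to show that $\mathbf{P}_L$ literally encodes, letter by letter, the transition table and accepting set of the canonical deterministic automaton with advice that $L$ admits by Theorem~\ref{thm:advice_regular_languages}; once this correspondence is made explicit, the MSO formula of Figure~\ref{form:3} for the existence of an accepting run can be reused almost verbatim, with its transition and acceptance predicates instantiated by projections onto single letters of $\mathbf{P}_L$.

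Concretely, I would first recall the automaton $\A = (Q, 0, \delta, F)$ with $Q = \set{0, \ldots, K-1}$ built in the $4 \Rightarrow 3$ step of Theorem~\ref{thm:advice_regular_languages}: its state $q$ indexes the lexicographic representative $u^{i,n}_q$ of a $\sim_{L,n-i}$-class, so by design $\delta(i, n, q, a) = q'$ iff $u^{i,n}_q \cdot a \sim_{L,n-i-1} u^{i+1,n}_{q'}$, and $(n, q) \in F$ iff $u^{n,n}_q \in L$. Comparing with (\ref{eq:p_l1}) and (\ref{eq:p_l2}), I see that $\mathbf{P}_{L,n}(i)$ is precisely the row of $\delta$ at position $i$ when $i < n - 1$, and the set of final states at length $n$ when $i = n - 1$. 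The entire automaton $\A$ can therefore be read off from $\mathbf{P}_L$.

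Next, I would take the formula $\varphi$ of Figure~\ref{form:3} applied to $\A$ and rewrite its predicate symbols $T^{q,a,q'}$ and $F^q$ in terms of $\mathbf{P}_L$. Following the convention of Section~\ref{defs:predicates}, I treat $\mathbf{P}_L$ as a word over the finite alphabet $\Sigma$, equivalently as a tuple of boolean monadic predicates indexed by the letters of $\Sigma$. Then $T^{q,a,q'}(x)$ becomes a finite disjunction, over transition functions $f$ with $f(q,a) = q'$, of the assertion that the letter of $\mathbf{P}_L$ at $x$ equals $f$; similarly for $F^q$. Substituting these back into the formula yields a sentence of $\MSO[\le,\mathbf{P}_L]$ that, by soundness of the $2 \Rightarrow 1$ construction in Theorem~\ref{thm:advice_regular_languages}, defines $L$.

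The main bookkeeping subtlety is the dichotomy in $\Sigma = (Q \times A \to Q) \uplus Q$ between internal positions and the final position, since the meaning of a letter of $\mathbf{P}_L$ differs between them. I would handle this by guarding the transition and acceptance conjuncts with the $\MSO[\le]$-definable predicate $\mathrm{last}$, ensuring that the two disjoint fragments of $\Sigma$ are invoked only at the appropriate positions. Beyond this bookkeeping, the argument is essentially a transcription of $\A$ into MSO, with the finite alphabet $\Sigma$ absorbed into the monadic predicate $\mathbf{P}_L$.
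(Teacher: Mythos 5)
Your proposal is correct and follows essentially the same route as the paper: the paper's proof also reads $\mathbf{P}_L$ as the transition/acceptance data of the canonical automaton built from the relations $\sim_{L,p}$ and writes an $\MSO[\le,\mathbf{P}_L]$ formula that existentially guesses a state-labelling of positions and checks it against that data, which is exactly the run-existence formula you obtain by instantiating the predicates $T^{q,a,q'}$ and $F^q$ of Figure~\ref{form:3} with letter-tests on $\mathbf{P}_L$ (guarded at the last position). The only difference is presentational: the paper writes the formula directly instead of invoking the $2 \Rightarrow 1$ construction of Theorem~\ref{thm:advice_regular_languages}.
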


\begin{proof}
From the definition of $\mathbf{P}_L$, it is easy to see that
the word $u$ of length $n$ belongs to $L$ if, and only if, there exists $X : \set{0,\ldots,n} \to Q$
such that:
$$
\begin{array}{lll}
& \forall q \in Q,\ & X(0) = q \iff \mathbf{P}_{L,n}(0)(0,u_0) = q \\
\wedge\ & \forall q,q' \in Q, \forall i < n-1,\ & X(i+1) = q'\ \Longleftarrow\ X(i) = q \wedge \mathbf{P}_{L,n}(i)(q,u_i) = q' \\
\wedge\ & \forall q \in Q,\ & X(n) = q \implies \mathbf{P}_{L,n}(n-1)(q) \ .
\end{array} $$
This can be written down as an $\MSO$ formula with the predicate $\mathbf{P}_L$.
\end{proof}

The \emph{if} direction of Theorem~\ref{thm:syntactical_predicate} follows from Lemma~\ref{lem:p_l},
because if $\mathbf{P}_L$ is regular, then $L \in \MSO[\le,\mathbf{P}_L] = \MSO[\le]$,
so $L$ is regular.

For the \emph{only if} direction, we prove a stronger statement that 
will be useful in Section~\ref{section:morphic}. Informally, we prove 
that the syntactic predicate $\mathbf{P}_L$ 
is the \emph{least} predicate required to define in $\MSO$ the language $L$.

\begin{lemma}\label{lem:p_l_converse}
Let $L$ be an advice regular language defined with the predicates $\liste{\mathbf{P}}$.
Then $\mathbf{P}_L \in \MSO[\le,\liste{\mathbf{P}}]$.
\end{lemma}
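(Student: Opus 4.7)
The goal is to define $\mathbf{P}_L$, which is really a bundle of finitely many monadic predicates (one per letter of the finite alphabet $\Sigma$), inside $\MSO[\le,\liste{P}]$. Concretely, for each $\sigma \in \Sigma$ I would construct an MSO formula $\psi_\sigma(x)$ using the predicate symbols $\liste{P}$ such that, in the one-letter word of length $n$ with predicate symbols interpreted by $\liste{\mathbf{P}}_n$, it holds at position $i$ exactly when $\mathbf{P}_{L,n}(i) = \sigma$. The constant $K$ bounding the number of equivalence classes (from Theorem~\ref{thm:advice_regular_languages}~(4)) is crucial: all auxiliary second-order quantifications below range over tuples of length at most $K$.

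The first step is to encode the equivalence relation $\sim_{L,n-i}$ restricted to words of length $i$ inside MSO, where $i$ is a first-order variable playing the role of a split point. Following the idiom of Lemma~\ref{lem:weak_choice} and Theorem~\ref{thm:substitution}, I represent a candidate word of length $i$ by an $A$-tuple $\liste{U}$ of monadic second-order variables partitioning $\set{0,\ldots,i-1}$, and a candidate tail of length $n-i$ by an $A$-tuple $\liste{W}$ partitioning $\set{i,\ldots,n-1}$. Substituting every atom $\mathbf{a}(y)$ in the defining formula $\varphi(\liste{P})$ by $(y < i \wedge y \in U_a) \vee (y \geq i \wedge y \in W_a)$ yields a formula which, using the predicates $\liste{\mathbf{P}}_n$, expresses that the word coded by $\liste{U}\liste{W}$ lies in $L$. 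Universally quantifying $\liste{W}$ over valid partitions produces a formula $E_i(\liste{U},\liste{V})$ equivalent to $u \sim_{L,n-i} v$.

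Second, using the lexicographic order $\lex$ on $A$-partitions (definable in MSO as in Lemma~\ref{lem:weak_choice}), I would express ``$\liste{U}$ is the lex-minimum representative of its $\sim_{L,n-i}$-class'' by $\forall \liste{V},\ E_i(\liste{U},\liste{V}) \implies \liste{U} = \liste{V} \vee \liste{U} \lex \liste{V}$. For a transition function $\sigma : Q\times A \to Q$ at a non-last position, $\psi_\sigma(x)$ then existentially quantifies $K$ slots $\liste{U}^0,\ldots,\liste{U}^{K-1}$ at position $x$ and $K$ slots $\liste{V}^0,\ldots,\liste{V}^{K-1}$ at position $x+1$, each slot being either a lex-min representative or flagged as unused by a Boolean marker. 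The formula requires the slots to enumerate every equivalence class exactly once and in lex order, then asserts that for every $(q,a,q')$ with $\sigma(q,a)=q'$ the extension of $\liste{U}^q$ by letter $a$ (a word of length $i+1$) is $\sim_{L,n-i-1}$-equivalent to $\liste{V}^{q'}$, via an analogue of $E_i$ at length $i+1$ with tail $n-i-1$. The last-position case is analogous, enumerating classes of words of length $n$ under $\sim_{L,0}$ and checking which ones satisfy the substituted $\varphi$.

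Since $K$ is constant and all quantifications have fixed finite arity, the result is a single formula in $\MSO[\le,\liste{P}]$, one per $\sigma$, and together they define $\mathbf{P}_L$. The main obstacle is purely syntactic bookkeeping: coordinating the $K$ slots with their unused flags, ensuring the enumeration is bijective onto equivalence classes in the correct lex order, and stitching all transition checks together. Conceptually, no new idea beyond the substitution/encoding idiom of Section~\ref{subsec:sub} is required.
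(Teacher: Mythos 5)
Your proposal is correct and follows essentially the same route as the paper's proof: represent words by tuples of monadic second-order variables, substitute the letter atoms of the defining formula to express membership of a prefix--tail decomposition, quantify universally over tails to capture $\sim_{L,n-i}$, and use the bound $K$ together with lexicographically minimal representatives to index the equivalence classes. The only cosmetic difference is how the index $q$ of a class is pinned down: the paper uses formulae $\gamma_q$ counting the lex-smaller pairwise-inequivalent words, whereas you existentially quantify an explicit lex-ordered enumeration of the at most $K$ minimal representatives with unused flags; the two devices are interchangeable.
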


\begin{proof}
Assume that $L$ is defined by an $\MSO$ formula $\theta$ with the predicates $\liste{\mathbf{P}}$.
Then the three equations~\eqref{eq:lexicographic_order},~\eqref{eq:p_l1} and~\eqref{eq:p_l2} defining $\mathbf{P}_L$
can be written down as an $\MSO$ formula with the predicates $\liste{\mathbf{P}}$.

To this end, we represent words as monadic second-order variables as in the proof of Lemma~\ref{lem:weak_choice}.
A $A$-tuple $\liste{X}$ of monadic second-order variables represents the word $v \in A^n$ 
if $\liste{\mathbf{X}}$ partitions the set of all positions up to position $n$,
and for all $i \in \set{0,\ldots,n-1}$, we have $v_i = a$ if, and only if, $i \in \mathbf{X}_a$.

Denote by $\chi(\liste{X},x)$ the $\MSO$ formula expressing that
$\liste{\mathbf{X}}$ partitions the set of all positions up to position $\mathbf{x}$.
Similarly, denote by $\Xi(\liste{X},x)$ the $\MSO$ formula expressing that
$\liste{\mathbf{X}}$ partitions the set of all positions from the position $\mathbf{x} + 1$.

The formulae for~\eqref{eq:lexicographic_order} and~\eqref{eq:p_l2} 
make use of the formulae $\chi(\liste{X},x)$ and $\theta$.
We omit them as they are easy to write down, and focus on~\eqref{eq:p_l1}.

The first step is to construct a formula $\varphi(\liste{X},\liste{Y},w)$ such that
if $w$ has length $n$, $\liste{\mathbf{X}}$ represents $u$ and $\liste{\mathbf{Y}}$ represents $v$ both of length $i$,
then $w,\liste{\mathbf{X}},\liste{\mathbf{Y}},i \models \varphi(\liste{X},\liste{Y},x)$
if, and only if, $u \sim_{L,n-i} v$.
Define $\varphi(\liste{X},\liste{Y},x)$ as:
$$\forall \liste{Z},\ \Xi(\liste{Z},x) \implies (\phi(\liste{X},x,\liste{Z}) \iff \phi(\liste{Y},x,\liste{Z}))\ ,$$
where the formula $\phi(\liste{X},x,\liste{Z})$ is obtained from $\theta$ by syntactically replacing in $\theta$
each letter predicate $\mathbf{a}(y)$ by $(y \le x \wedge y \in X_a) \vee (y > x \wedge y \in Z_a)$.

The second step is to construct a finite number of formulae $\gamma_\ell(\liste{X},x)$ for $q \in Q$ such that 
if $w$ has length $n$ and $\liste{\mathbf{X}}$ represents $u$ of length $i$, then 
$w,\liste{\mathbf{X}},i \models \gamma_\ell(\liste{X},x)$
if, and only if, there are exactly $q - 1$ words of length $i$ that are 
(i) pairwise not equivalent with respect to $\sim_{L,n-i}$, 
(ii) not equivalent to $u$ with respect to $\sim_{L,n-i}$,
and (iii) smaller than $u$ with respect to the lexicographic order.

We can now put the pieces together and give a formula for~\eqref{eq:p_l1}:
\begin{multline*}
\forall x, \forall \liste{X}, \forall \liste{Y},\  
\bigwedge_{a \in A,\ q,q' \in Q} x \in X_a \wedge \chi(\liste{X},x) \wedge \chi(\liste{Y},x) \implies \\
P_{L,q,a,q'}(z) \iff (\gamma_q(\liste{X},x) \wedge \gamma_{q'}(\liste{Y},x))\ .
\end{multline*}
It follows that $\mathbf{P}_L$ is definable in $\MSO[\le,\liste{\mathbf{P}}]$.
\end{proof}

\section{Applications}
\label{sec:applications}

In this section we show several consequences of Theorem~\ref{thm:advice_regular_languages}
(characterization of the advice regular languages), Theorem~\ref{thm:substitution}
(a substitution property for advice regular languages)
and Theorem~\ref{thm:syntactical_predicate} (a syntactical predicate for advice regular predicates).

The first two applications are about two conjectures, the Straubing and the Crane Beach Conjectures, 
introduced in the context of circuit complexity.
We first explain the motivations for these two conjectures,
and show very simple proofs of both of them in the special case of monadic predicates.

%The third application is a corollary of those two applications, which
%allows to transfer separation results from (classical) logical fragments
%to logical fragments with monadic predicates.

The third application shows that one can determine, given an $\MSO$ formula
with morphic predicates, whether it defines a regular language.

\subsection{A Descriptive Complexity for Circuit Complexity Classes}
\label{subsec:circuits}
We first quickly define some circuit complexity classes.
The most important here is $\AC$, the class of languages defined by boolean circuits of bounded depth and polynomial size,
and its subclass $\LAC$ where the circuits have linear size.
From $\AC$, adding the modular gates gives rise to $\ACC$.
Finally, the class of languages defined by boolean circuits of logarithmic depth, polynomial size and fan-in $2$ is denoted by $\NC$.
Separating $\ACC$ from $\NC$ remains a long-standing open problem.

One approach to better understand these classes is through descriptive complexity theory, 
giving a perfect correspondence between circuit complexity classes and logical formalisms.
Unlike what we did so far, the logical formalisms involved in this descriptive complexity
theory use predicates of any arity (we focused on predicates of arity one).
A $k$-ary predicate $\mathbf{P}$ is given by $(\mathbf{P}_n)_{n \in \N}$,
where $\mathbf{P}_n \subseteq \set{0,\ldots,n-1}^k$.
We denote by $\NN$ the class of all predicates,
and by $\Reg$ the class of regular predicates as defined in~\cite{Straubing94}.

We recall the notations for some of the classical classes of formulae: $\FO$ (first-order quantifiers),
$\FOMOD$ (first-order and modular quantifiers: $\exists^{r,q} x, \varphi(x)$ reads 
``the number of $x$ satisfying $\varphi(x)$ is equal to $r \bmod{q}$''),
$\FO^2$ (first-order with at most two variables)
and $\BS_k$ (at most $k-1$ alternations of $\exists$ and $\forall$ quantifiers).

\begin{theorem}[\cite{Immerman87,BCST92,GL84,KLPT06}]\hfill
\begin{enumerate}
	\item[(1)] $\AC  = \FO[\NN]$,
	\item[(2)] $\LAC = \FO^2[\NN]$,
	\item[(3)] $\ACC = (\FOMOD)[\NN]$.
\end{enumerate}
\end{theorem}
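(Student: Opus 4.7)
The plan is to establish each of the three logic-circuit correspondences by effective translations in both directions, with the arbitrariness of the predicates in $\NN$ matching the non-uniformity of the circuit families.

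For the direction from logic to circuits, I would proceed by structural induction on the formula. Given a sentence $\varphi \in \FO[\NN]$ with $k$ variables and quantifier depth $d$, for each input length $n$ I build a circuit $C_n$: atomic letter predicates $\mathbf{a}(x)$ become wires to the corresponding input bit; a numerical predicate $P(x_1,\ldots,x_j)$, whose interpretation on $\set{0,\ldots,n-1}^j$ is fixed once $n$ is known, evaluates to a constant $0$ or $1$ at each assignment; each first-order quantifier becomes an OR (respectively AND) gate of fan-in $n$; Boolean connectives translate directly. The resulting circuit has constant depth $O(d)$ and polynomial size $O(n^k)$, placing the language of $\varphi$ in $\AC$. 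For part~(3) the same construction applies, with modular quantifiers translated into MOD gates, giving $(\FOMOD)[\NN] \subseteq \ACC$. For part~(2), the two-variable restriction forces reuse of the quantified variables, so at each quantifier only the currently-bound variable ranges over $n$ values while the other is held fixed, which keeps the total size linear.

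For the converse direction I would encode the circuit family itself by numerical predicates. Given a family $(C_n)_n$ of constant depth $d$ and polynomial size, I introduce predicates describing the gate type, the wiring relation $\mathrm{EDGE}(x,y)$, and the output gate; these are all numerical predicates, hence admissible in $\NN$. An $\FO[\NN]$ sentence then expresses ``the output gate of $C_n$ evaluates to $1$ on the input'' using $d$ alternating blocks of first-order quantifiers, one per circuit layer, with the input letter predicates consulted at the leaves. For $\ACC$ the same scheme applies, with modular quantifiers used to read out MOD gates. For $\LAC$, the linearity of the circuit size forces at each layer a description using only two variables, and a careful layer-by-layer simulation (reusing one variable as a scratchpad) produces the required two-variable formula.

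The main obstacle will be part~(2): the two-variable characterization of $\LAC$ is substantially more delicate than the other two, because first-order logic with $k$ variables corresponds naturally to polynomial size $n^k$, and squeezing down to linear size demands arguing that at every stage the gate under consideration is addressable with a single free variable while the other variable is overwritten. Parts~(1) and~(3) are essentially Immerman's classical correspondence, adorned with modular counting, and mostly amount to careful bookkeeping of the translation sketched above.
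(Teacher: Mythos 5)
The paper does not prove this theorem: it is imported from the literature (Immerman; Barrington--Compton--Straubing--Th\'erien; Koucky--Lautemann--Poloczek--Th\'erien), so there is no in-paper argument to compare yours against. Judged on its own, your sketch of (1) and (3) is the standard two-way translation and is essentially correct, modulo two details you gloss over. In the circuit-to-logic direction a polynomial-size family has $n^{k}$ gates, so gates must be addressed by $k$-tuples of first-order variables; the wiring relation is then a $2k$-ary numerical predicate, not a binary $\mathrm{EDGE}(x,y)$, and the quantifier blocks range over tuples. For (3) this has a knock-on effect: a $\MOD$ gate now requires counting satisfying \emph{tuples} modulo $q$, and simulating that with unary modular quantifiers (summing the residues layer by layer and case-splitting on the finitely many residue vectors) is standard but not automatic, so it should be said.

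The genuine gap is part (2). Your reason why $\FO^2[\NN]\subseteq\LAC$ --- ``only the currently-bound variable ranges over $n$ values while the other is held fixed'' --- does not yield linear size: a subformula with \emph{two} free variables (for instance a Boolean combination of letter atoms at $x$ and at $y$ with binary numerical atoms $P(x,y)$) has $n^{2}$ assignments, and the gate-per-(subformula, assignment) induction you describe is quadratic. Linear gate count requires an additional sharing argument, e.g.\ putting the two-free-variable part into a disjunctive normal form, splitting each term into its $x$-part, its $y$-part and the numerical constants, computing the $y$-parts once (shared over all $x$) and letting the numerical constants only select which $y$-gates feed each of the $n$ quantifier gates. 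The converse $\LAC\subseteq\FO^2[\NN]$ likewise hinges on the specific fact that linearly many gates can be addressed by a \emph{single} position variable, and on a careful two-variable recursion through the constantly many layers with one variable overwritten at each step; this is exactly the non-trivial content of the Koucky--Lautemann--Poloczek--Th\'erien result, not bookkeeping. You correctly identify (2) as the obstacle, but you do not close it, so as it stands the proposal proves (1) and (3) in outline and leaves (2) unproven.
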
  

Two conjectures have been formulated on the logical side,
which aim at clarifying the relations between different circuit complexity classes.
They have been stated and studied in special cases, we extrapolate them here to all fragments.
We first need to give an abstract notion of (logical) fragment.
Several such notions can be found in the bibliography, with more or less strong syntactic restrictions (see~\cite{KL12}). 
In this paper, we use a minimalist definition of fragment: we only require to be allowed to substitute predicates within each formula. 
Remark that this property is not restrictive and is satisfied by all classical fragments of $\MSO$.
We fix the universal signature, containing infinitely many predicate symbols for each arity.
Let $\F$ be a class of formulae over this signature and $\PP$ a class of predicates,
describing the fragment $\F[\PP]$ by:
$$\F[\PP] = \set{L_{\varphi,\liste{\mathbf{P}}} \mid \varphi \in \F \wedge \liste{\mathbf{P}} \in \PP}\ .$$

The first property, called the Straubing property, characterizes the regular languages (denoted by $\REG$) inside
a larger fragment.

\begin{definition}[Straubing Property]
$\F[\PP]$ has the Straubing property if:
all regular languages definable in $\F[\PP]$ are also definable in $\F[\PP \cap \Reg]$.

In symbols,
\[
\F[\PP] \cap \REG = \F[\PP \cap \Reg]\ .
\]
\end{definition}

This statement appears for the first time in~\cite{BCST92},
where it is proved that $\FO[\NN]$ has the Straubing property,
relying on lower bounds for $\AC$ and an algebraic characterization of $\FO[\Reg]$.
Following this result, Straubing conjectures in~\cite{Straubing94} that 
$(\FOMOD)[\NN]$ and $\BS_k[\NN]$ have the Straubing property for $k \ge 1$.
Recently, this conjecture has been extended to $\FO^2[\NN]$ (see~\cite{KLPT06}).
If true, it would imply the separation of $\ACC$ from $\NC$,
and for the $\FO^2$ case, tight lower bounds on the addition of two integers in binary. 

We already mentioned that several several fragments have the Straubing property,
as for instance, $\mathbf{\Sigma_1}[\NN]$, $\FO[\le,\M]$ and $(\FOMOD)[\le,\M]$, 
as proved by Straubing and Barrington~\cite{Straubing92,Barrington-Straubing95}
by using Ramsey arguments for one scan programs and algebraic characterizations of these fragments.
In this paper, we give a simpler syntactical proof that all fragments $\F[\le,\M]$ have the Straubing property.
%The two latter results rely of algebraic characterisation of $\FO[\Reg]$ and $\FOMOD[<]$ as well as 
%Ramsey Theorem and one scan program. 
The second property, called the Crane Beach property, characterizes the languages having a neutral letter, 
and is derived from a conjecture proposed by Th{\'e}rien for the special case of first-order logic
and finally disproved in the article~\cite{BILST05}.

\begin{definition}[Neutral letter]
A language $L$ has a neutral letter $e \in A$ if for all words $u,v$, we have 
$uv \in L$ if, and only if, $uev \in L$.
\end{definition}

\begin{definition}[Crane Beach Property]
$\F[\PP]$ has the Crane Beach property if:
all languages having a neutral letter definable in $\F[\PP]$ are definable in $\F[\le]$.
\end{definition}

Unfortunately, as mentioned, the Crane Beach property does not hold in general.
\begin{theorem}[\cite{BILST05,SchweikardtPHD}]
There exists a non-regular language having a neutral letter definable in $\FO[\NN]$.
\end{theorem}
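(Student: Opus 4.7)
The plan is to exhibit an explicit language with the three required properties, following the refutation of the Crane Beach Conjecture by Barrington, Immerman, Lautemann, Schweikardt and Th\'erien. Since $\FO[\NN] = \AC$, this amounts to producing a non-regular language with a neutral letter that sits in $\AC$.

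First I would fix a finite alphabet $A$ with a distinguished symbol $e \in A$ intended to be neutral, and design membership of a word in $L$ to depend only on the subsequence of non-$e$ letters; this makes $e$ neutral tautologically. Then I would pick a non-regular property of sequences over $A \setminus \{e\}$ that is known to be computable in $\AC$---a natural choice is an arithmetic relation going beyond modular counting, for instance a property encoding multiplication or an $\AC$-computable encoding of a fragment of arithmetic, so that non-regularity is immediate from the non-regularity of the underlying relation.

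Third, I would realize this property as a first-order formula $\varphi$ over numerical predicates. Because $\NN$ is unrestricted (no computability or uniformity assumption is placed on the predicates), one is free to introduce high-arity numerical predicates whose truth on a tuple of positions encodes exactly the desired arithmetic check. The outer structure of $\varphi$ would use first-order quantifiers to isolate the non-$e$ positions of the input and then feed only those tuples into these numerical predicates, after which membership reduces to the chosen arithmetic test.

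The main obstacle is the tension between (i) using numerical predicates, which by definition depend only on absolute positions and on the length of the word, and (ii) requiring $e$ to be neutral, which forces the formula to be invariant under insertion or deletion of $e$'s. Inserting an $e$ shifts the absolute positions of all subsequent letters, so the numerical predicates cannot be used naively: the formula must be engineered so that only the \emph{relative} positions of non-$e$ letters matter. Arranging the high-arity predicates together with the quantifier structure so that arbitrary arithmetic over the non-$e$ projection is captured without spoiling $e$-neutrality is the delicate core of the construction in~\cite{BILST05}.
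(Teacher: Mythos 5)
The paper does not prove this statement at all: it is quoted from~\cite{BILST05,SchweikardtPHD}, so the only question is whether your outline would stand on its own as a proof. It does not. What you have written is a plan whose decisive step is explicitly delegated back to~\cite{BILST05} (``the delicate core of the construction''), and that step is precisely the content of the theorem: exhibiting a concrete non-regular $D$ over the non-neutral letters whose closure under insertion of the neutral letter is still in $\FO[\NN]=\AC$. No such $D$ is produced, so nothing is proved.

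Moreover, the mechanism you sketch for step three does not work as stated. A first-order formula has only a bounded number of variables, so it cannot ``isolate the non-$e$ positions and feed only those tuples into the numerical predicates'': the subsequence of non-neutral letters is unbounded, and a $k$-ary predicate can only be applied to $k$ positions at a time. The real obstruction is worse than a positional shift: to evaluate a property of the non-$e$ subsequence one must recover, inside the formula, which non-$e$ letter is the $i$\textsuperscript{th} one, and this amounts to counting occurrences of non-neutral letters, which $\AC$ cannot do in general. Consequently, for ``natural'' choices of the underlying non-regular property the padded language is provably \emph{not} in $\AC$ (e.g.\ the neutral-letter closure of $\set{u : \#_a(u)=\#_b(u)}$ is majority-hard); with a single non-neutral letter no counterexample exists at all, so the theorem genuinely hinges on a carefully engineered $D$ --- in~\cite{BILST05} one based on verifying sums of suitably encoded numbers, designed so that constantly many variables plus arbitrary numerical predicates suffice and neutrality is preserved. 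Your parenthetical candidate is also off: a property ``encoding multiplication'' is not available, since integer multiplication is not computable in $\AC$. In short, you have correctly identified where the difficulty lies, but identifying it is not the same as resolving it, and the proposal as written has no proof content beyond the citation.
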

A deeper understanding of the Crane Beach property specialized to first-order logic can be found in~\cite{BILST05}.
In particular, it has been shown that $\FO[\le,\M]$ has the Crane Beach property.
In this paper, we give a simple proof that $\MSO[\le,\M]$ has the Crane Beach Property.

\subsection{The Straubing Conjecture for Advice Regular Languages}
\label{subsec:straubing}

\begin{theorem}
\label{thm:straubing}
All fragments $\F[\le,\M]$ have the Straubing property. 
\end{theorem}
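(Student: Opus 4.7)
The plan is to reduce the statement directly to the substitution property. Unwinding the definition of the Straubing property, we must establish the equality
$$\F[\le,\M] \cap \REG\ =\ \F[\le,\M \cap \Reg]\ =\ \F[\le,\Reg_1]\ ,$$
where the second equality simply records that a monadic predicate is regular precisely when it lies in $\Reg_1$. The inclusion from right to left is immediate: by Theorem~\ref{thm:def_reg}, any predicate in $\Reg_1$ is definable in $\MSO[\le]$, so substituting these definitions into any $\MSO$ formula produces a regular language.

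For the nontrivial inclusion, consider a regular language $L \in \F[\le,\M]$, so that $L = L_{\varphi,\liste{\mathbf{P}}}$ for some sentence $\varphi(\liste{P}) \in \F$ and predicates $\liste{\mathbf{P}} \in \M$. Since $\F \subseteq \MSO$, the sentence $\varphi$ also belongs to $\MSO[\le,\M]$, and $L_{\varphi,\liste{\mathbf{P}}}$ is regular by assumption, so Theorem~\ref{thm:substitution} applies and yields regular monadic predicates $\liste{\mathbf{Q}} \in \Reg_1$ such that
$$L_{\varphi,\liste{\mathbf{Q}}}\ =\ L_{\varphi,\liste{\mathbf{P}}}\ =\ L\ .$$
The key observation is that the sentence $\varphi$ has not been modified in this step; only its interpretation on the predicate symbols $\liste{P}$ has been replaced. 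Since the minimalist notion of fragment adopted here requires exactly the closure of $\F$ under substitution of the predicates interpreting a formula, we conclude $L \in \F[\le,\Reg_1]$, as required.

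All of the technical difficulty is already absorbed into Theorem~\ref{thm:substitution}, whose proof relied on the Regular Choice Lemma and on an $\MSO$-definable selection of a lexicographically minimal witness. There is no additional obstacle here; the only subtle point is that the argument is \emph{not} an induction on the syntax of $\F$ but rather a semantic replacement of the interpreting predicates of a fixed formula. This is precisely why the statement holds for an arbitrary fragment $\F$ with no syntactic hypothesis beyond stability under predicate substitution.
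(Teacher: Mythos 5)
Your proof is correct and follows essentially the same route as the paper: the theorem is obtained as a direct corollary of Theorem~\ref{thm:substitution}, replacing the interpreting predicates $\liste{\mathbf{P}}$ by regular ones $\liste{\mathbf{Q}}$ while keeping the sentence $\varphi$ fixed, which is exactly what the minimalist notion of fragment permits. Your additional remarks on the easy inclusion and on the argument not being a syntactic induction are accurate but not needed beyond what the paper records.
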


This is actually a straightforward corollary of Theorem~\ref{thm:substitution}.
\begin{proof}
Let $\varphi \in \F$ such that $L_{\varphi,\liste{\mathbf{P}}}$ with $\liste{\mathbf{P}} \in \M$ is regular.
Thanks to Theorem~\ref{thm:substitution}, there exist $\liste{\mathbf{Q}} \in \Reg_1$ such that 
$L_{\varphi,\liste{\mathbf{Q}}} = L_{\varphi,\liste{\mathbf{P}}}$.
This concludes the proof.
\end{proof}

We state a corollary of Theorem~\ref{thm:straubing}.

\begin{corollary}
For all $k \ge 1$, $\BS_k[\le,\M]$ has the Straubing property.
\end{corollary}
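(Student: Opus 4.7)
The plan is to observe that the corollary is an immediate specialization of Theorem~\ref{thm:straubing} to the fragment $\F = \BS_k$, so the only substantive task is to verify that $\BS_k$ satisfies the minimalist notion of fragment introduced in Section~\ref{subsec:circuits}, namely closure under substitution of predicate symbols.

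First I would recall that $\BS_k$ is defined purely by a syntactic restriction on the quantifier prefix: formulae are required to be boolean combinations of formulae having at most $k-1$ alternations of $\exists$ and $\forall$ quantifiers. Substituting a predicate symbol $P$ of arity $r$ in a formula $\varphi(\liste{P})$ by another predicate symbol $Q$ of the same arity is an operation that only replaces atomic subformulae $P(x_1,\ldots,x_r)$ by $Q(x_1,\ldots,x_r)$; it does not introduce quantifiers and does not alter the quantifier prefix. Consequently, $\BS_k$ is closed under such substitutions, so it qualifies as a fragment in the sense required by the theorem.

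Having verified this, I would directly invoke Theorem~\ref{thm:straubing}: for any $\varphi \in \BS_k$ and any predicates $\liste{\mathbf{P}} \in \M$ such that $L_{\varphi,\liste{\mathbf{P}}}$ is regular, there exist regular predicates $\liste{\mathbf{Q}} \in \Reg_1$ such that $L_{\varphi,\liste{\mathbf{Q}}} = L_{\varphi,\liste{\mathbf{P}}}$. This gives the inclusion $\BS_k[\le,\M] \cap \REG \subseteq \BS_k[\le, \M \cap \Reg_1]$, and the reverse inclusion is immediate since $\M \cap \Reg_1 \subseteq \M$.

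There is no genuine obstacle: the only content of the proof is the (trivial) observation that the syntactic shape of a $\BS_k$ formula is preserved under predicate substitution, after which Theorem~\ref{thm:straubing} applies verbatim. The remark quoted in the paper — that the fragment property ``is not restrictive and is satisfied by all classical fragments of $\MSO$'' — covers exactly this case.
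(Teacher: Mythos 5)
Your proposal is correct and matches the paper's (implicit) argument: the paper gives no separate proof, treating the corollary as an immediate instance of Theorem~\ref{thm:straubing}, and your only added content --- checking that $\BS_k$ is closed under substitution of predicate symbols because substitution does not alter the quantifier prefix --- is exactly the verification the paper leaves to the reader via its remark that all classical fragments satisfy the fragment property.
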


%\begin{corollary}
%The alternation hierarchy of first-order logic with monadic predicates is strict,
%\textit{i.e.}:
%$$\BS_1[\le,\M]\ \subsetneq\ \BS_2[\le,\M]\ \subsetneq\ \cdots \subsetneq\ \BS_k[\le,\M]\ .$$
%\end{corollary}

%This result is, to the best of our knowledge, the first intermediary result towards a proof of the Straubing Conjecture for $\BS_k[\NN]$. 

We conclude this subsection by remarking that the substitution property does not hold over infinite words, even for monadic predicates.
This follows from the simple observation that adding the ``bit-predicate'' to first-order logic allows us to express all of monadic second-order logic.
Formally, the bit-predicate $\mathbf{B}$ is defined by $\mathbf{B}(x,y)$ holds if the $y$\textsuperscript{th} bit of the binary representation of $x$ is $1$.
Roughly speaking, in the setting of infinite words the bit-predicate can make use of the infinite number of positions to talk about any finite set, 
hence first-order logic with the bit-predicate expresses all of weak monadic second-order logic, which coincides with monadic second-order logic.

Now the Straubing Property over infinite words for first-order logic reads:
$$\FO[\NN] \cap \omega \REG = \FO[\Reg]\ .$$
This would imply $\MSO[\le] \subseteq \FO[\Reg]$, which does not hold: the parity language,
defined by $L = \set{u \cdot \natural^\omega \mid u \in \set{a,b}^* \textrm{ has an even number of } a}$
belongs to $\MSO[\le]$, but not to $\FO[\Reg]$~\cite{STT95}.

\subsection{The Crane Beach Conjecture for Advice Regular Languages}
\label{subsec:crane_beach}
In this subsection, we show that the Crane Beach Conjecture holds for advice regular languages.
\begin{theorem}
\label{thm:crane_beach}
$\MSO[\le,\M]$ has the Crane Beach property. 
\end{theorem}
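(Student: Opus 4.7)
The plan is to apply characterization~(4) of Theorem~\ref{thm:advice_regular_languages}: since $L$ is advice regular, there exists $K \in \N$ such that, for every $i,p \in \N$, the restriction of $\sim_{L,p}$ to $A^i$ has at most $K$ classes. Writing $e$ for the neutral letter, I will show that the full Myhill-Nerode relation $\sim_L$ itself has at most $K$ classes, which by the classical Myhill-Nerode theorem yields $L \in \MSO[\le]$.

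The first step uses the neutral letter to turn $(\sim_{L,p})_{p \in \N}$ into a refining chain. Given $p \le p'$ and $w \in A^p$, the padded word $w e^{p'-p}$ has length $p'$, and iterating the neutral letter property gives $u w e^{p'-p} \in L \iff u w \in L$ (and likewise for $v$). Hence $u \sim_{L,p'} v$ implies $u \sim_{L,p} v$. Fixing a length $i$, we therefore obtain on $A^i$ a chain $\sim_{L,0} \supseteq \sim_{L,1} \supseteq \cdots$ in which each equivalence has at most $K$ classes, so the chain must stabilize at some $p^\ast$. Since $\sim_L = \bigcap_{p} \sim_{L,p}$, the restriction of $\sim_L$ to $A^i$ coincides with $\sim_{L,p^\ast}$ on $A^i$ and therefore has at most $K$ classes.

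It remains to collapse across lengths. Applying the neutral letter once more yields $u \sim_L ue$ for every $u$, so each $\sim_L$-class $C$ has a minimum-length representative of some length $i_C$ and then contains a word of every length $\ge i_C$. For each $i$, the assignment $C \mapsto C \cap A^i$ is consequently an injection from the set of classes $C$ with $i_C \le i$ into the set of $\sim_L$-classes on $A^i$, the latter being of size at most $K$ by the previous step. Letting $i \to \infty$ exhausts all $\sim_L$-classes, so there are at most $K$ of them altogether, and hence $L$ is regular. I do not anticipate any real obstacle: the neutral letter supplies exactly the padding needed both to refine the family $(\sim_{L,p})_p$ as $p$ grows and to identify classes living at different lengths, and the bound $K$ from characterization~(4) propagates through both operations without loss.
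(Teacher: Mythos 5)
Your proof is correct and follows essentially the same route as the paper: both invoke characterization~(4) of Theorem~\ref{thm:advice_regular_languages} and use the neutral letter to pad prefixes and extension words so that the uniform bound $K$ on the relations $\sim_{L,p}$ restricted to a fixed length transfers to the full relation $\sim_L$, giving regularity via Myhill--Nerode. The only difference is bookkeeping: you organize the argument as a stabilizing refinement chain in $p$ plus a cross-length injection, whereas the paper argues by contradiction, padding $K+1$ pairwise inequivalent words and their witnesses to common lengths.
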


The proof is a simple corollary of Theorem~\ref{thm:advice_regular_languages}.

\begin{proof}
Recall that a language over finite words $L$ has a neutral letter $e \in A$
if for all words $u$ and $v$, we have $uv \in L$ if, and only if, $uev \in L$.
In other words, $u \sim_L ue$.

\vskip1em
Let $L$ be an advice regular language, thanks to Theorem~\ref{thm:advice_regular_languages},
there exists $K \in \N$ such that for all $i,p \in \N$, 
the restriction of $\sim_{L,p}$ to words of length $i$ contains at most $K$ equivalence classes.
	
We argue that $\sim_L$ contains at most $K$ equivalence classes (without both restrictions to words of a given length).
Indeed, assume to the contrary that there are $K+1$ words that are pairwise non-equivalent with respect to~$\sim_L$.
By iterating the equivalence $u \sim_L ue$, we obtain $K+1$ words of the same length (the maximal length of the $K+1$ original words),
which are still pairwise non-equivalent with respect to $\sim_L$.
For two non-equivalent words $u,v$, there exist a third word $w$ witnessing the non-equivalence: 
$uw \in L$ but $vw \notin L$ or the other way around.
Again by padding with the neutral letter $e$, we obtain non-equivalence witnesses for each pair of the $K+1$ words
of the same length (the maximal length of the $\frac{(K+1)(K+2)}{2}$ witnesses).
Hence we have $K+1$ words of the same length which are not equivalent with respect to $\sim_{L,p}$ for the same $p$,
a contradiction.
It follows that $L$ is regular, \textit{i.e} $L \in \MSO[\le]$.
\end{proof}

\subsection{Morphic Regular Languages}\label{section:morphic}
\label{subsec:morphic}
In this subsection, we apply Theorem~\ref{thm:syntactical_predicate} to the case of morphic predicates,
and obtain the following result: given an $\MSO$ formula with morphic predicates,
it is decidable whether it defines a regular language.

The class of morphic predicates was first introduced by Thue in the context of combinatorics on words,
giving rise to the HD0L systems.
Formally, let $A,B$ be two finite alphabets, 
$\sigma : A^* \to A^*$ a morphism, $a \in A$ a letter such that $\sigma(a) = a \cdot u$ for some $u \in A^+$
and $\varphi : A^* \to B^*$ a morphism. 
This defines the sequence of words $\varphi(a),\varphi(\sigma(a)),\varphi(\sigma^2(a)),\ldots$,
which converges to a finite or infinite word.
An infinite word obtained in this way is said to be morphic.

We see morphic words as predicates, and denote by $\HDOL$ the class of morphic predicates.
We call the languages definable in $\MSO[\le,\HDOL]$ morphic regular.

\begin{theorem}
\label{thm:morphic}
The following problem is decidable:
given $L$ a morphic regular language, is $L$ regular?
Furthermore, if $L$ is regular, then we can construct a finite automaton for $L$.
\end{theorem}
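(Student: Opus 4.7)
The plan is to exploit Theorem~\ref{thm:syntactical_predicate} to reduce the regularity question for $L$ to a decidability question about the syntactical predicate $\mathbf{P}_L$, which will itself turn out to be a morphic sequence, and then to import the classical decidability of ultimate periodicity for such sequences.

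First, since $L \in \MSO[\le,\HDOL] \subseteq \MSO[\le,\M]$, the language $L$ is advice regular, so its syntactical predicate $\mathbf{P}_L$ is well-defined and Theorem~\ref{thm:syntactical_predicate} reduces the regularity of $L$ to that of $\mathbf{P}_L$. By Lemma~\ref{lem:p_l_converse}, $\mathbf{P}_L$ admits an $\MSO$-definition using only the original morphic predicates $\liste{\mathbf{P}}$. Strictly speaking, the formula produced by the lemma depends on the constant $K$ bounding the number of Myhill--Nerode classes, which is not known a priori; I would circumvent this by running the algorithm in parallel for $K = 1, 2, 3, \ldots$ and stopping at the first value at which the subsequent test succeeds.

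Second, I would invoke the fact that the class of morphic (equivalently, $\HDOL$) sequences is closed under $\MSO$-definable transformations, a standard consequence of the decidability of the $\MSO$ theory of morphic words proved by Carton and Thomas. Applying this closure to the formula from Lemma~\ref{lem:p_l_converse}, one obtains an effective morphic description (a morphism, a starting letter and a coding) of $\mathbf{P}_L$ viewed as a sequence over the alphabet $\Sigma = (Q \times A \to Q) \uplus Q$ from Section~\ref{subsec:syntactical_predicate}.

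Third, I would appeal to the theorem of Durand (and independently of Mitrofanov) stating that it is decidable whether a given morphic sequence is ultimately periodic. Since a uniform predicate is regular precisely when it is ultimately periodic, this decides the regularity of $\mathbf{P}_L$, hence of $L$. In the positive case, reading off the preperiod and period of $\mathbf{P}_L$ yields directly the transition structure of a finite automaton for $L$: each internal letter of $\mathbf{P}_L$ is a function $Q \times A \to Q$ encoding a transition on Myhill--Nerode classes, while the final letter records the accepting classes.

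The main obstacle lies in the second step: the syntactical predicate $\mathbf{P}_L$ is a priori non-uniform, so invoking the closure of morphic sequences under $\MSO$-interpretations requires some care in encoding. A clean workaround is to pad input words with a fresh neutral letter $\flat$, as in the proof of Theorem~\ref{thm:sub_uniform}, reducing the problem to a question about a single morphic $\omega$-word, a setting in which the Carton--Thomas machinery and Durand's theorem apply directly.
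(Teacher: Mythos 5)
Your overall route is exactly the paper's: reduce regularity of $L$ to regularity (ultimate periodicity) of the syntactical predicate $\mathbf{P}_L$ via Theorem~\ref{thm:syntactical_predicate}, use Lemma~\ref{lem:p_l_converse} to get $\mathbf{P}_L \in \MSO[\le,\HDOL]$, invoke the (effective) closure of $\HDOL$ under $\MSO$-interpretations to obtain a morphic description of $\mathbf{P}_L$, and conclude with the Durand--Mitrofanov decidability of ultimate periodicity, effectiveness giving the automaton in the regular case. Your explicit handling of the non-uniformity of $\mathbf{P}_L$ by $\flat$-padding as in Theorem~\ref{thm:sub_uniform} is a reasonable way to make the last two steps precise, and is in the spirit of the paper's treatment.

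There is, however, one concrete flaw in your effective procedure: the treatment of the unknown bound $K$. Running the construction in parallel for $K = 1,2,3,\ldots$ and ``stopping at the first value at which the subsequent test succeeds'' has no sound stopping criterion. With a too-small $K$, the formula of Lemma~\ref{lem:p_l_converse} still defines \emph{some} word over $\Sigma$, the closure step still produces a morphic description of it, and the periodicity test still returns a yes/no answer --- nothing visibly ``fails'', so you may halt at a wrong $K$ and certify the wrong verdict about $L$. The repair is simple and requires no search: $K$ is computable from $\varphi$ alone, for instance as the number of states of the deterministic automaton with advice obtained by the effective constructions of Theorem~\ref{thm:advice_regular_languages} (its proof of $3 \Rightarrow 4$ bounds the number of classes of $\sim_{L,p}$ at every length by the number of states, and the state set depends only on the formula, not on the advice); alternatively, ``at most $K$ classes at every length'' is itself an $\MSO$ property of the morphic advice and can be decided using the decidability of the $\MSO$ theory of morphic words~\cite{CT02}. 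A minor attribution point: the closure of morphic words under $\MSO$-interpretations is not a consequence of that decidability result as such; the paper derives it from the characterization of morphic words as the structures automatically presentable with the lexicographic ordering~\cite{RM02}. Neither point changes the architecture, which matches the paper's proof.
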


The proof of this theorem goes in two steps:
\begin{itemize}
	\item first, we reduce the regularity problem for a morphic regular language $L$
to deciding the ultimate periodicity of $\mathbf{P}_L$,
	\item second, we show that $\mathbf{P}_L$ is morphic.
\end{itemize}
Hence we rely on the following result: given a morphic word, it is decidable whether it is ultimately periodic.
The decidability of this problem was conjectured $30$ years ago and proved recently and simultaneously by Durand and Mitrofanov~\cite{Durand13,Mitrofanov12}. 

The first step is a direct application of Theorem~\ref{thm:syntactical_predicate}.
For the second step, observe that thanks to Lemma~\ref{lem:p_l_converse},
we have $\mathbf{P}_L \in \MSO[\le,\HDOL]$.
We conclude with the following lemma, which follows from the characterization of morphic words
as being those automatically presentable with the lexicographic ordering~\cite{RM02}.

\begin{lemma}
$\HDOL$ is closed under $\MSO$-interpretations,
\textit{i.e.}
if $\mathbf{P}$ is an infinite word such that $\mathbf{P} \in \MSO[\le,\HDOL]$, then $\mathbf{P} \in \HDOL$.
\end{lemma}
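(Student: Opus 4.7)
The plan is to go through the characterization of \cite{RM02}, by which an infinite word $\mathbf{P}$ is morphic if and only if the structure $(\N,\le,\mathbf{P})$ admits an automatic presentation in which $\le$ is interpreted by the lexicographic ordering on the domain. I would translate the problem into a question about automatic presentations, exploit their closure properties, and translate back.

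First, I would apply \cite{RM02} to each of the morphic predicates $\mathbf{P}_1,\dots,\mathbf{P}_k$ appearing in the $\MSO$ formula defining $\mathbf{P}$: each comes with an abstract numeration system $S_i=(L_{S_i},\preceq_i)$, namely a regular language of encodings of $\N$ enumerated in lexicographic order, together with a DFA reading an encoding and outputting the corresponding letter of $\mathbf{P}_i$. Then I would construct a common refinement $S=(L_S,\preceq)$ via a synchronous product of the DFAs enumerating the $L_{S_i}$, so that the $n$\textsuperscript{th} lex-smallest word of $L_S$ simultaneously encodes the $n$\textsuperscript{th} lex-smallest word of each $L_{S_i}$. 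In this common presentation every $\mathbf{P}_i$ remains $S$-automatic.

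Next, in the $S$-presentation the entire structure $(\N,\le,\mathbf{P}_1,\dots,\mathbf{P}_k)$ is automatic: positions are words of $L_S$, the order $\le$ on $\N$ corresponds to the lex order (a regular relation on padded pairs of encodings), and each $\mathbf{P}_i$ is a regular subset of $L_S$. The key step is then to invoke the standard closure of automatic structures under $\MSO$-interpretations over a linear order. Because the base order is linear, $\MSO$ over $(\N,\le)$ is captured by B\"uchi-type automata reading $\omega$-words indexed by $\N$; thanks to the order-preserving bijection $\N\to L_S$ induced by $\preceq$, such B\"uchi-type computations can be simulated by finite-automaton computations on $L_S$. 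Processing the defining formula $\varphi$ by induction yields, for each letter $b$, a DFA on $L_S$ recognising exactly the encodings of positions $n$ with $\mathbf{P}(n)=b$. In other words, $\mathbf{P}$ is itself $S$-automatic, and applying \cite{RM02} in the reverse direction gives $\mathbf{P}\in\HDOL$.

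The main obstacle lies in the $\MSO$-closure step. First-order closure of automatic presentations is classical (automata are closed under projection), but $\MSO$ introduces set quantifiers, which in general take one out of the automatic setting. What saves us is twofold: we are working over a linear order, so the $\MSO$ theory of $(\N,\le)$ collapses to B\"uchi automata on $\omega$-words; and we are only defining a monadic predicate (a subset of $\N$, or equivalently a word over a finite alphabet), not an arbitrary relation. Under these restrictions, the inductive translation of $\varphi$ into automata transports cleanly across the lex-order bijection $\N\to L_S$ supplied by the numeration system $S$, and one never leaves the realm of finite-automaton recognisability on $L_S$.
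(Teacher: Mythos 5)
You take the same route the paper gestures at, namely the characterization of~\cite{RM02} (morphic words are exactly the words automatically presentable with the lexicographic/genealogical ordering), but your second step is a genuine error. Building a ``common refinement'' numeration system by a synchronous product of the enumerating DFAs, so that the $n$-th word of $L_S$ simultaneously encodes the $n$-th word of each $L_{S_i}$, is not possible in general: the convolution $\set{\langle \mathrm{rep}_{S_1}(n),\ldots,\mathrm{rep}_{S_k}(n)\rangle \mid n \in \N}$ need not be regular, because the order isomorphism between two regular languages under the genealogical order is not a synchronized relation (already for $a^*$ versus $\set{a,b}^*$ the $n$-th words have linearly versus logarithmically growing lengths, so no padding-based automaton can align them). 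Worse, if such a common refinement always existed, then the pairing $n \mapsto (u(n),v(n))$ of any two morphic words would be morphic: projections of an $S$-automatic word are $S$-automatic, so both components would be substitutive with respect to the same dominant eigenvalue, and a Cobham--Durand type argument rules this out for, e.g., the Fibonacci word paired with the Thue--Morse word. So the tuple of morphic predicates has to be assumed to arise from a single morphic word (equivalently a single numeration system); this is how the lemma must be read and applied, and it cannot be manufactured by your product construction.

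The second gap is the central closure step, which you assert rather than prove. Automatic presentations are closed under first-order interpretations but not under $\MSO$-interpretations, and the two mitigating facts you invoke (the base order is linear, the target is monadic) do not by themselves give the simulation. Concretely, after applying B\"uchi's theorem to the formula $\varphi(x)$ you must show that a finite automaton reading only $\mathrm{rep}_S(n)$ can compute (i) the state reached by a deterministic $\omega$-automaton on the prefix $w[0..n)$ of the morphic word and (ii) whether the run on the suffix $w[n..\infty)$ from that state is accepting; that is, that both the prefix-state coloring and the future/acceptance coloring of a morphic word are themselves $S$-automatic. This is precisely the nontrivial content of the lemma (it is where the substitutive structure of $w$, in the style of Carton--Thomas, has to be used), and ``transports cleanly across the lex-order bijection'' is a restatement of the claim, not an argument for it. The paper's own justification is admittedly only the citation of~\cite{RM02}, so you are on the intended route, but your write-up neither repairs the multi-predicate issue nor supplies the key closure argument.
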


Furthermore, all constructions in this proof are effective, and if $\mathbf{P}_L$ is ultimately periodic,
then one can compute the threshold and the period,
and derive from them a finite automaton for $L$.

As a corollary, we also obtain from Theorem~\ref{thm:morphic} the decidability of $\MSO[\le,\HDOL]$.
Indeed, from a language in $\MSO[\le,\HDOL]$, we first determine whether it is regular, and:
if it is regular, then determine whether it is empty by looking at the (effectively constructed) finite automaton recognizing it, 
and if it is not regular, then it is non-empty (since the empty language is regular).
We stress however that this result can be obtained with a much more direct proof~\cite{CT02}.

%\begin{corollary}
%The satisfiability problem for $\MSO[\le,\HDOL]$ is decidable.
%\end{corollary}

\section*{Acknowledgments}
We thank Thomas Colcombet and Sam van Gool for fruitful discussions,
and Jean-{\'E}ric Pin for his advice.
We are grateful to the anonymous referees for their \textbf{very} constructive comments.

\bibliographystyle{alpha}
\bibliography{bib}
\end{document}